\title{Kamp Theorem for Pomset Languages of Higher Dimensional Automata} %
\author{Emily Clement}{CNRS, LIPN UMR 7030, Université Sorbonne Paris Nord, F-93430 Villetaneuse, France}{emily.clement@lipn.univ-paris13.fr}{https://orcid.org/0000-0002-8105-665X}{}
\author{Enzo Erlich}{Université Paris Cité, CNRS, IRIF, F-75013, Paris, France  \and EPITA Research Laboratory (LRE), Paris, France}{erlich@irif.fr}{https://orcid.org/0000-0001-7375-4725}{}%
\author{Jérémy Ledent}{Université Paris Cité, CNRS, IRIF, F-75013, Paris, France}{jeremy.ledent@irif.fr}{https://orcid.org/0009-0007-3574-6362}{}%
\authorrunning{E. Clement, E. Erlich and J. Ledent} %
\keywords{Higher dimensional automata, temporal logic, Kamp's theorem} %
\begin{document}

\maketitle

\begin{abstract}
 Temporal logics are a powerful tool to specify properties of computational systems.
 For concurrent programs, Higher Dimensional Automata (HDA) are a very expressive model of non-interleaving concurrency.
 HDA recognize languages of partially ordered multisets, or pomsets.
 Recent work has shown that Monadic Second Order (MSO) logic is as expressive as HDA for pomset languages.
 In the case of words, Kamp's theorem states that First Order (FO) logic is as expressive as Linear Temporal Logic (LTL).
 In this paper, we extend this result to pomsets.
 To do so, we first investigate the class of pomset languages that are definable in FO.
 As expected, this is a strict subclass of MSO-definable languages.
 Then, we define a Linear Temporal Logic for pomsets (\lsptl), and show that it is equivalent to FO.

\end{abstract}

\newpage

\section{Introduction}
\label{sec-intro}

\subparagraph*{Context.}

Higher-Dimensional Automata (HDA)~\cite{Pratt91} are a powerful model of concurrency that enriches standard finite-state automata with higher-dimensional cells, allowing to specify that some events may occur simultaneously.
HDA represent asynchronous concurrent computations where events are non-atomic, and can overlap in various complex patterns. 
Thus, they are a model of \emph{non-interleaving} concurrency, which is concerned with simultaneity between events rather than commutation of events.
HDA have been shown to be a very expressive model of concurrency: comparisons with other models (such as Petri nets, event structures) can be found in~\cite{Glabbeek06}, based on history-preserving bisimulations, or in~\cite{GoubaultM12}, based on adjunctions.

\begin{figure}[htbp]
	\centering
	\begin{tikzpicture}[scale=1, >=stealth, shorten >=1pt, shorten <=1pt, font=\footnotesize]
		\foreach \x in {0,1,2}
		{
			\foreach \y in {0,1}
			{
				\draw[->, semithick] (\x,\y+0.1) -- (\x,\y+0.9) ;
			}
		}
		\foreach \x in {0,1,2}
		{
			\foreach \y in {0,1}
			{
				\draw[->, semithick] (\y+0.1,\x) -- (\y+0.9,\x) ;
			}
		}
		\node at (0.5,-0.2) {$a$};
		\node at (1.5,-0.2) {$c$};
		\node at (-0.2,0.5) {$b$};
		\node at (-0.2,1.5) {$d$};
		\foreach \x / \y in { 0/0 , 0/1 , 1/0 , 1/1 }
		{
			\fill[fill opacity = 0.1] (\x,\y) -- (\x+1,\y) --(\x+1,\y+1) --(\x,\y+1) -- cycle  ;
		}
		\draw[->, thick, color=red] (0,0) .. controls (0.5,1.5) .. (1.9,1.9);
		\foreach \x in {0,1,2}
		{\foreach \y in {0,1,2}
			{\node[state, fill=white, inner sep=0pt, minimum size=8pt] (v\x\y) at (\x,\y) {};
		}}
		\node[state, initial, initial text={}, fill=white, inner sep=0pt, minimum size=8pt] (v00) at (0,0) {};
		\node[state, accepting, fill=white, inner sep=0pt, minimum size=8pt] (v22) at (2,2) {};
	\end{tikzpicture}
	\qquad \qquad
	\begin{tikzpicture}[scale=1, baseline=-0.8cm]
		\def\hw{0.3}
		\def\fh{0.7}
		\def\sh{0.2}
		\coordinate (O) at (0,0);
		\coordinate[above = 1cm of O] (Up);
		\coordinate[right = 3cm of O] (Right);
		\coordinate[above = 1cm of Right] (Up-right);
		\path[->, >=latex] (0,0) edge node[font=\scriptsize, below, pos=0.9] {Time} (3.2,0);

		\filldraw[pomset-1] (0.2,\fh) -- (1.8,\fh) -- (1.8,\fh+\hw) -- (0.2, \fh+\hw) --  cycle; 
		\node at (1,\fh+\hw*0.5) {$a$};
		
		\filldraw[pomset-4] (2,\fh) -- (3,\fh) -- (3,\fh+\hw) -- (2, \fh+\hw) --  cycle; 
		\node at (2.5,\fh+\hw*0.5) {$c$};
		
		\filldraw[pomset-2] (0.2,\sh) -- (1.1,\sh) -- (1.1,\sh+\hw) -- (0.2, \sh+\hw) --  cycle; 
		\node at (0.65,\sh+\hw*0.5) {$b$};
		
		\filldraw[pomset-3] (1.3,\sh) -- (3,\sh) -- (3,\sh+\hw) -- (1.3, \sh+\hw) --  cycle; 
		\node at (2.15,\sh+\hw*0.5) {$d$};
	\end{tikzpicture}
	\qquad \qquad
	\begin{tikzpicture}[scale=1, baseline=-1.3cm]
		\coordinate (A) at (0,0.5);
		\node (a) at (A) {$a$};
		\node[right of = a, xshift = 0.2cm] (c) {$c$};
		\node[below of = a, yshift=0.05cm] (b) {$b$};
		\node[right of = b, xshift = 0.2cm] (d) {$d$};
		\path[precedence] (b) edge (d) (a) edge (c) (b) edge (c);
	\end{tikzpicture}
	\caption{An HDA execution, depicted with intervals, and as a pomset.}
	\label{fig:HDA-pomset}
\end{figure}

A notion of language of HDA was developed only recently~\cite{FahrenbergJSZ21}.
The key idea is that an execution of an HDA may not be represented as a word (where the events/letters are totally ordered); instead, it is a \emph{partially ordered multiset}, or \emph{pomset} for short.
This is illustrated in \Cref{fig:HDA-pomset}. On the left, an HDA is depicted, with two processes running in parallel. One process is executing two events $a$, then~$c$, while the second process is running $b$, then~$d$.
One possible execution path is depicted in red. %
This execution can equivalently be depicted using intervals (center of \Cref{fig:HDA-pomset}): time flows from left to right, and the duration of events is depicted by stretching horizontally the boxes representing the events. When two events overlap vertically, this means that they are occurring simultaneously.
Finally, a third representation of this execution is depicted on the right of \Cref{fig:HDA-pomset}, as a pomset.
The events are ordered according to Lamport's \emph{happens before} relation~\cite{Lamport86}: $a \to c$ means that event $a$ terminates before $c$ starts. In particular, overlapping events are not ordered.

Thus, HDA recognize languages of pomsets.
Note that, just as sequential executions are a special case of concurrent executions; words are also a special case of pomsets, where the order between events happens to be total.
A recent line of work has been extending classic results of automata theory to the setting of pomset languages of HDA.
They have been shown to enjoy a variant of Kleene theorem~\cite{FahrenbergJSZ22}, a Myhill-Nerode theorem~\cite{FahrenbergZ23}, a pumping Lemma~\cite{AmraneBFZ23}, and a B\"uchi-Elgot-Trakhtenbrot theorem~\cite{AmraneBFF24}.
The latter result will be of particular importance for us: it says that Monadic Second-Order (MSO) logic describes the same class of languages as HDA.
To establish this result, the authors used an equivalent representation of pomsets called \emph{ST-sequences} (see \Cref{sec:ST-decomposition}),
which are words over a finite alphabet.
This proof technique, further developed in~\cite{AmraneBCFZ24}, provides a way to lift some results from words to pomsets.

In this paper, we investigate the class of languages obtained by restricting to First Order~(FO) logic instead of MSO.
Over finite words, FO-definable languages form a strict subclass of regular languages, called \emph{aperiodic languages}~\cite{Schutzenberger65}.
There are many equivalent characterizations of this class, such as star-free regular expressions and counter-free automata~\cite{McNaughton71}.
Of particular interest to us is Kamp's theorem~\cite{Kamp68, Rabinovich14}, which states that Linear Temporal Logic (LTL) is as expressive as FO logic over finite words.

\subparagraph*{Contributions: FO and LTL for pomset languages.} 
This paper studies the class of FO-definable languages of pomsets. As expected, they form a strict subclass of HDA languages (\Cref{cor-FO<MSO}).
Our main goal is to define an LTL-style logic over pomsets, that defines the same class of languages as FO logic; thus extending Kamp's theorem to pomset languages.
To that end, we first show that FO formulas over pomsets can be translated into equivalent FO formulas over ST-sequences (\Cref{thm-FOP-to-FOST}).
A similar result for MSO formulas was proved in~\cite{AmraneBFF24}. However, it cannot be easily adapted: it relies heavily on a relation~$\relation$, that relates different occurrences of the same event in an ST-sequence. 
To show that this relation is definable in MSO, the authors use second-order quantification.
In our first technical contribution (\Cref{lem-sim-FO}), we show that the relation~$\relation$ is FO-definable.
Instead of giving directly an FO formula (which would be very tedious), we define a counter-free automaton.

We then define a variant of LTL for pomsets that we call \lsptl 
and prove that \lsptl is equivalent to \fo (\Cref{thm-LTLP--FOP}).
We sum up the proof of \Cref{thm-sptl=fo} by the chain of translations depicted in \Cref{fig-translations}.
\begin{figure}
    \centering
    \begin{tikzpicture}[scale=.8]

    \node[draw, rectangle, rounded corners=.2cm, minimum height=.6cm] (FOP) {\fo for pomsets};
    \node[draw, rectangle, rounded corners=.2cm, minimum height=.6cm] (FOW) [right=5cm of FOP] {\fo for ST-sequences};
    \node (Inter) [below=1cm of FOW] {};
    \node[draw, rectangle, rounded corners=.2cm, minimum height=.6cm] (LTLW) [below=0.8cm of FOW] {\ltl for ST-sequences};
    \node[draw, rectangle, rounded corners=.2cm, minimum height=.6cm] (SPTL) [below=0.8cm of FOP] {\ltl for pomsets (\lsptl)};

    \path[-latex, thick] (FOP) edge node [above] {\cref{thm-FOP-to-FOST}} (FOW);
    \path[latex-latex, thick] (FOW) edge node [right] {Kamp's theorem \cite{Rabinovich14}} (LTLW);
    \path[-latex, thick] (LTLW) edge %
    node[below] {\cref{lem-LTL-to-SPTL}}(SPTL);
    \path[-latex, thick] (SPTL) edge node [left] {\cref{lem-SPTL-to-FOP}} (FOP);
\end{tikzpicture}
	\vspace{-12pt}
    \caption{Summary of the proof of \Cref{thm-sptl=fo}.}
    \label{fig-translations}
\end{figure}

\subparagraph*{Relationship with Mazurkiewicz traces.}
Pomsets also appear in the literature in the context of Mazurkiewicz traces~\cite{Mazurkiewicz86Trace}, a notion also at the intersection between automata theory and concurrency theory.
Many variants of LTL on traces have been defined, and their expressivity has been extensively studied~\cite{Thiagarajan94traceLTL, DiekertG02completeLTL, ThiagarajanW02completeLTL, DiekertG06local, DiekertG06local-global, GastinK07PSPACE}.
Thus, it is important to distinguish this line of work with what we are studying in this paper.

The set of Mazurkiewicz traces can be described as the free partially commutative monoid over a dependence alphabet. Thus, a trace is an equivalence class of finite words where some letters are allowed to commute.
Equivalently, they can also be regarded as pomsets, using their \emph{dependence graph}.
In the dependence graph of a Mazurkiewicz trace, $a \to b$ denotes causal dependency. Two incomparable events are independent, i.e., it does not matter in which order they occur.
In contrast, for HDA pomsets, the precedence order $a \to b$ indicates that~$a$ terminates before $b$ starts, and two events are incomparable when they are simultaneous (i.e., the time intervals during which they are executed overlap). 
This semantic distinction results in several technical differences:
\begin{enumerate}[(i)]
\item HDA pomsets are always required to be \emph{interval orders} (see \Cref{def-interval-order}). This is not the case for Mazurkiewicz traces.
\item For HDA, it may very well be the case that $a \to b$ (i.e., $a$ happens before~$b$) in one execution, while $a$ and $b$ are incomparable (i.e., simultaneous) in another execution.
This is not allowed for Mazurkiewicz traces. Indeed, they are defined with respect to a dependence alphabet, which determines in advance which events may or may not be comparable in the dependence graph.
\item Lastly, HDA pomsets actually have an extra relation $\eventorder$ called the \emph{event order}.
Indeed, we require simultaneous events (events that are incomparable for the relation $\to$) to be ordered by the event order.
This is a way of managing process identity: if two processes are running~$a$ in parallel, the event order allows to distinguish the two events~$a$.
\end{enumerate}

\subparagraph*{Other related work.}
Modal logics over HDA have been previously investigated.
In~\cite{Prisacariu10}, the author introduces a logic called Higher-Dimensional Modal Logic (HDML), which is interpreted directly on an HDA.
It has two variants of the \enquote*{next} operator: one that can start an event, and one that can terminate an event.
A sound and complete axiomatization of this logic is given, as well as tentative definitions of LTL-like and CTL-like extensions.
The language theory over pomsets is not investigated since pomset languages of HDA had not been defined at the time.
However, an encoding of LTL for Mazurkiewicz traces into HDML is given.
A more recent paper~\cite{ZouariZF24} introduces IPomset Modal Logic (IPML).
This logic features a forward modality and a backwards modality, in the spirit of path logic~\cite{JoyalNW96}.
No temporal variant of IPML is defined, the focus of this paper being on bisimulation equivalence.

\subparagraph*{Plan of the paper.}

In \Cref{sec-defs}, we recall the key definitions that we will be using. %
Then, in \Cref{sec-FO}, we investigate FO over pomsets, and show our first technical contribution: FO over pomsets can be translated to FO over ST-sequences (\Cref{thm-FOP-to-FOST}).
In \Cref{sec-ltl}, we define our temporal logic \lsptl,
and study its expressivity in \Cref{sec-ltl-expressivity} (\Cref{thm-sptl=fo}).

\section{Preliminaries}
\label{sec-defs}

In this section, we define several notions that we will need in this paper: interval pomsets with interfaces~\cite{FahrenbergJSZ21}, their ST-decomposition~\cite{AmraneBCFZ24}, and MSO logic on pomsets~\cite{AmraneBFF24}.
\subsection{Interval pomsets with interfaces}

\subparagraph*{Pomsets, interfaces, dimension.}

Let $ \alphabet $ be a finite alphabet.
A partially ordered multiset (or pomset) over $\alphabet$ is a generalization of words, where the letters need not be totally ordered.
This can represent situations where two or more events occur at the same time.
An intuitive representation of some pomsets is given in \Cref{fig-ex-pomsets}, where the events $a, b, c, d \in \alphabet$ are depicted as intervals, and the horizontal axis represents the elapsing time.
For instance, in \Cref{fig-pomset-1}, both events $a$ and $b$ occur before $c$, which we denote by $a \precedence c$ and $b \precedence c$.
However, since the intervals for $a$ and $b$ overlap, the two events are concurrent: they are incomparable for the precedence relation~$\precedence$.  
\begin{figure}[htbp]
    \centering 
    \begin{subfigure}[t]{0.2\linewidth}
        \centering
		\scalebox{0.8}{
         \begin{tikzpicture}
        	\coordinate (A) at (0,0);
        	\node (a) at (A) {$a$};
        	\node[right = .8cm of a] (c) {$c$};
        	\node[below = .5cm of a] (b) {$b$};
        	\path[precedence] (a) edge (c) (b) edge (c);
        	\path[event-order] (a) edge (b);
        \end{tikzpicture}
		}
    	\medskip

		\scalebox{0.8}{
        \begin{tikzpicture}%
            \def\hw{0.3}
            \def\fh{0.6}
            \def\sh{0.1}
            \coordinate (O) at (0,0);
            \coordinate[above = 1cm of O] (Up);
            \coordinate[right = 3cm of O] (Right);
            \coordinate[above = 1cm of Right] (Up-right);
            \draw[-] (O) -- (Up);
            \draw[-] (Right) -- (Up-right);
        
            \filldraw[pomset-1] (0.2,\fh) -- (1.7,\fh) -- (1.7,\fh+\hw) -- (0.2, \fh+\hw) --  cycle; 
            \node at (0.2+0.75,\fh+\hw*0.5) {$a$};
        
            \filldraw[pomset-4] (2,\fh) -- (2.8,\fh) -- (2.8,\fh+\hw) -- (2, \fh+\hw) --  cycle; 
            \node at (2.4,\fh+\hw*0.5) {$c$};
        
            \filldraw[pomset-2] (0.4,\sh) -- (1.3,\sh) -- (1.3,\sh+\hw) -- (0.4, \sh+\hw) --  cycle; 
            \node at (0.85,\sh+\hw*0.5) {$b$};
        \end{tikzpicture}
		}
    \caption{}%
    \label{fig-pomset-1}
    \end{subfigure}
    \hfill
    \begin{subfigure}[t]{0.2\linewidth}
    \centering
	\scalebox{0.8}{
    \begin{tikzpicture}
    	\coordinate (A) at (0,0);
    	\node (o) at (A) {};
    	\node[right = 0.4cm of A] (a) {$a$};
    	\node[below = 0.5cm of o] (b) {$b$};
    	\node[right = 0.8cm of b] (d) {$d$};
    	\path[precedence] (b) edge (d);
    	\path[event-order] (a) edge (b) (a) edge (d);
    \end{tikzpicture}
	}
	\medskip
	
	\scalebox{0.8}{
    \begin{tikzpicture}%
        \def\hw{0.3}
        \def\fh{0.6}
		\def\sh{0.1}
        \coordinate (O) at (0,0);
        \coordinate[above = 1cm of O] (Up);
        \coordinate[right = 3cm of O] (Right);
        \coordinate[above = 1cm of Right] (Up-right);
        \draw[-] (O) -- (Up);
        \draw[-] (Right) -- (Up-right);

        \filldraw[pomset-1] (0.2,\fh) -- (2.8,\fh) -- (2.8,\fh+\hw) -- (0.2, \fh+\hw) --  cycle; 
        \node at (1.5,\fh+\hw*0.5) {$a$};

        \filldraw[pomset-2] (0.2,\sh) -- (1.1,\sh) -- (1.1,\sh+\hw) -- (0.2, \sh+\hw) --  cycle; 
        \node at (0.65,\sh+\hw*0.5) {$b$};

        \filldraw[pomset-3] (1.4,\sh) -- (2.8,\sh) -- (2.8,\sh+\hw) -- (1.4, \sh+\hw) --  cycle; 
        \node at (2.1,\sh+\hw*0.5) {$d$};
    \end{tikzpicture}
	}
    \caption{}%
    \label{fig-pomset-2}
\end{subfigure}
\hfill
\begin{subfigure}[t]{0.2\linewidth}
    \centering
	\scalebox{0.8}{
    \begin{tikzpicture}
    	\coordinate (A) at (0,0.5);
    	\node (a) at (A) {$a$};
    	\node[right of = a, xshift = 0.2cm] (c) {$c$};
    	\node[below of = a, yshift=0.05cm] (b) {$b$};
    	\node[right of = b, xshift = 0.2cm] (d) {$d$};
    	\path[precedence] (b) edge (d) (a) edge (c) (b) edge (c);
    	\path[event-order] (a) edge (b) (a) edge (d) (c) edge (d);
    \end{tikzpicture}
	}
	\medskip
	
	\scalebox{0.8}{
    \begin{tikzpicture}%
        \def\hw{0.3}
        \def\fh{0.6}
        \def\sh{0.1}
        \coordinate (O) at (0,0);
        \coordinate[above = 1cm of O] (Up);
        \coordinate[right = 3cm of O] (Right);
        \coordinate[above = 1cm of Right] (Up-right);
        \draw[-] (O) -- (Up);
        \draw[-] (Right) -- (Up-right);

        \filldraw[pomset-1] (0.2,\fh) -- (1.7,\fh) -- (1.7,\fh+\hw) -- (0.2, \fh+\hw) --  cycle; 
        \node at (0.2+0.75,\fh+\hw*0.5) {$a$};

        \filldraw[pomset-4] (2,\fh) -- (2.8,\fh) -- (2.8,\fh+\hw) -- (2, \fh+\hw) --  cycle; 
        \node at (2.4,\fh+\hw*0.5) {$c$};

        \filldraw[pomset-2] (0.2,\sh) -- (1.1,\sh) -- (1.1,\sh+\hw) -- (0.2, \sh+\hw) --  cycle; 
        \node at (0.65,\sh+\hw*0.5) {$b$};

        \filldraw[pomset-3] (1.4,\sh) -- (2.8,\sh) -- (2.8,\sh+\hw) -- (1.4, \sh+\hw) --  cycle; 
        \node at (2.1,\sh+\hw*0.5) {$d$};
    \end{tikzpicture}
	}
    \caption{}%
    \label{fig-pomset-3}
\end{subfigure}
\hfill
\begin{subfigure}[t]{0.2\linewidth}
    \centering
	\scalebox{0.8}{
    \begin{tikzpicture}
    	\coordinate (A) at (0,0.5);
    	\node (a) at (A) {$\vphantom{\ibullet}a$};
    	\node[right of = a, xshift = 0.2cm] (c) {$c\vphantom{\ibullet}$};
    	\node[below of = a, yshift = 0.05cm] (b) {$\ibullet b$};
    	\node[right of = b, xshift = 0.2cm] (d) {$d\ibullet$};
    	\path[precedence] (b) edge (d) (a) edge (c) (b) edge (c);
    	\path[event-order] (a) edge (b) (a) edge (d) (c) edge (d);
    \end{tikzpicture}
	}
	\medskip
	
	\scalebox{0.8}{
    \begin{tikzpicture}%
        \def\hw{0.3}
        \def\fh{0.6}
        \def\sh{0.1}
        \coordinate (O) at (0,0);
        \coordinate[above = 1cm of O] (Up);
        \coordinate[right = 3cm of O] (Right);
        \coordinate[above = 1cm of Right] (Up-right);
        \draw[-] (O) -- (0,\sh);
        \draw[-] (0, \sh+\hw) -- (Up);
        \draw[-] (Right) -- (Up-right);

        \filldraw[pomset-1] (0.2,\fh) -- (1.7,\fh) -- (1.7,\fh+\hw) -- (0.2, \fh+\hw) --  cycle; 
        \node at (0.2+0.75,\fh+\hw*0.5) {$a$};

        \filldraw[pomset-4] (2,\fh) -- (2.8,\fh) -- (2.8,\fh+\hw) -- (2, \fh+\hw) --  cycle; 
        \node at (2.4,\fh+\hw*0.5) {$c$};

        \fill[fill=pomset-2-light] (-0.01,\sh) -- (1.1,\sh) -- (1.1,\sh+\hw) -- (-0.01, \sh+\hw) --  cycle; 
        \draw[-] (0,\sh) --(1.1, \sh) --  (1.1,\sh+\hw) -- (0,\sh+\hw);
        \node at (0.55,\sh+\hw*0.5) {$b$};

        \fill[fill=pomset-3-light] (1.4,\sh) -- (3.01,\sh) -- (3.01,\sh+\hw) -- (1.4, \sh+\hw) --  cycle; 
        \draw[-]  (3.01,\sh+\hw) -- (1.4, \sh+\hw) -- (1.4,\sh) -- (3.01,\sh)  ;
        \node at (2.2,\sh+\hw*0.5) {$d$};
    \end{tikzpicture}
	}
    \caption{}%
    \label{fig-pomset-4}
\end{subfigure}
    \caption{Four interval pomsets of dimension 2. Pomset (d) has an interface.}
    \label{fig-ex-pomsets}
\end{figure}

\noindent
As seen in \Cref{fig-ex-pomsets}, the partial orders that we are interested in arise from an interval representation, where $x \precedence y$ means that $x$'s interval terminates before $y$'s interval starts. This is called an interval order.

\begin{definition}[Interval order]\label{def-interval-order}
    Let $(\pom, \precedence)$ be a partially ordered set.
    The relation $\precedence$ is an \defs{interval order} if there exist $ \lowInt, \upInt : \pom \rightarrow \bbR $, with $\lowInt(x) \leq \upInt(x)$,  satisfying the following condition:
    $\forall x, y \in \pom.\quad x \precedence y \iff \upInt(x) < \lowInt(y)$
\end{definition}

\noindent
Not all partial orders are interval orders: for example, the $(2+2)$ pomset depicted in \Cref{fig-noninterval} does not have an interval representation.
Indeed, if we try to assign intervals to the four events, with $a$ before $c$ and $b$ before $d$, we always end up with an extra relation: either $a$ before $d$, or $b$ before $c$, or both.
In fact, interval orders can be characterized as exactly those partial orders that do not have an induced subposet isomorphic to $(2+2)$.

\begin{figure}[h]
    \centering
        \begin{tikzpicture}
            \coordinate (A) at (0,0.5);
            \node (a) at (A) {$a$};
            \node[right of = a, xshift = 0.2cm] (c) {$c$};
            \node[below of = a, yshift = 0.3cm] (b) {$b$};
            \node[right of = b, xshift = 0.2cm] (d) {$d$};
            \path[precedence] (b) edge (d) (a) edge (c);
        \end{tikzpicture}
    \caption{Example of a non-interval pomset: the $2+2$ partial order.}
    \label{fig-noninterval}
\end{figure}

We can now define the notion of pomsets, interval pomsets, and their variants with interfaces.
Pomsets are also equipped with a binary relation $\eventorder$ called the event order, which orders concurrent events.
The intuition of pomsets with interfaces is that we allow some events to be already active at the beginning of a pomset, or still running at the end (see \Cref{fig-pomset-4}).
\begin{definition}[iiPomset]\label{def-pomset}
A \defs{partially ordered multiset} (also called \defs{pomset}) over an alphabet $ \alphabet $ 
is a tuple $( \pom, \precedence_{\pom}, \eventorder_{\pom}, \labelling_{\pom}) $ where 
$ \pom $ is a finite set, $ \precedence_{\pom} $ is a strict partial order over $\pom$ called \defs{precedence}, and $\eventorder_{\pom}$ is an acyclic relation on $\pom$ called the \defs{event order}, and $ \labelling_{\pom} \colon \pom \to \alphabet$ is a labeling function,
 s.t.\  for all $x, y \in \pom$, exactly one of the following holds: 
 \[x = y,\quad x \precedence_{\pom} y,\quad y \precedence_{\pom} x,\quad x \eventorder_{\pom} y,\quad y \eventorder_{\pom} x.\]
 We write $x \parallel_{\pom} y$ when $x \not<_{\pom} y$ and $y \not<_P x$.
 When there is no ambiguity, 
 we denote $ \precedence_{\pom}, \parallel_{\pom}, \eventorder_{\pom} $ and $ \labelling_{\pom} $ as 
 $ \precedence, \parallel, \eventorder$ and $ \labelling $.
\begin{itemize}
\item A pomset $( \pom, \precedence_{\pom}, \eventorder_{\pom}, \labelling_{\pom}) $ is an \defs{interval pomset} if 
$(\pom, \precedence_{\pom})$ is an interval order.
\item An \defs{interval pomset with interfaces} is an interval pomset $(\pom, \precedence_{\pom}, \eventorder_{\pom}, \labelling_{\pom})$ together with two sets $\sint{\pom} \subseteq \pom$ and $\tint{\pom} \subseteq \pom$, called the starting (\resp terminating) interfaces. We require elements of $\sint{\pom}$ (\resp $\tint{\pom}$) to be minimal (\resp maximal) elements w.r.t.~$\precedence_{\pom}$.
\end{itemize}
\end{definition}

\noindent
The set of interval pomsets with interfaces is denoted $\iipoms$.
Note that pomsets are a special case of pomsets with interfaces, where the interfaces are $\sint{P} = \tint{P} = \emptyset$.
In the rest of the paper, pomsets are always assumed to be interval pomsets with interfaces, so we drop the extra adjectives.
Moreover, pomsets are (often implicitly) considered up to isomorphism: the underlying set~$P$ itself does not matter as long as the rest of the structure is the same.

The \defs{dimension} of a pomset $\pom$
is the size of a maximal $\precedence$-antichain in~$\pom$, that is, a maximal set of elements of $\pom$ that are pairwise incomparable w.r.t.\ the precedence relation~$\precedence$.
Such events are called \defs{concurrent}. Note that any set of concurrent events is totally ordered by the event order $\eventorder$.
Intuitively, the dimension of a pomset is the maximal number of processes running concurrently at any time during this execution. 
We denote by $\iipomsk{k}$ the set of pomsets of dimension $\leq k$.

When we draw pomsets in pictures, we use a plain arrow
\begin{tikzpicture}[baseline=-0.5ex]
	\draw[precedence] (0,0) -- (0.5, 0);
\end{tikzpicture} instead of $\precedence$ for the precedence order, 
and a gray dashed arrow
\begin{tikzpicture}[baseline=-0.5ex]
    \draw[event-order] (0,0) -- (0.5, 0);
\end{tikzpicture} for the event order.
We represent the interfaces using a bullet symbol $\ibullet$. We denote an event~$a$ as $\ibullet a$ when it belongs to the starting interface~$\sint{\pom}$; $a \ibullet$ when it belongs to the terminating interface~$\tint{\pom}$; and $\ibullet a \ibullet$ when it belongs to both.
Four pomsets (with and without interface) are depicted in \Cref{fig-ex-pomsets}, next to their interval representation.

\subparagraph*{Gluing of pomsets.}
Gluing is an operation on pomsets that extends word concatenation.
The gluing of two pomsets $P$ and $Q$ is a pomset $P \glue Q$, where all the events of $P$ happen before those of $Q$.
However, we must also take care of the interfaces of~$P$ and~$Q$.
Indeed, if an event is still active when finishing $P$, it must be active when beginning $Q$.
Hence, the terminating interface of $P$ and the starting interface of $Q$ must match.
Formally, we can view $\tint{P}$ and $\sint{Q}$ as pomsets (inheriting the labeling and event order from $P$ and $Q$, respectively), and we require that $\tint{P}$ and $\sint{Q}$ must be isomorphic as pomsets (\emph{i.e.}, the labels and event order must be preserved).
For the purpose of this paper, we only define gluing when the interfaces match exactly ($T_P = S_Q$ as pomsets); see~\cite{FahrenbergJSZ22} for a more robust definition.
For instance, the pomset of \Cref{fig-pomset-4} can be obtained as: 

\begin{center}
\begin{tikzpicture}[scale=0.8, baseline=-0.5ex, every node/.style={transform shape}]
    \coordinate (A) at (0,-0.5);
    \node (b) at (A) {$ \ibullet b $};
    \node[right of = b] (d) {$\vphantom{\ibullet}  d \ibullet$};
    \node[above of = d] (a) {$\vphantom{\ibullet}a \ibullet$};
    \path[precedence] (b) edge (d);
    \path[event-order] (a) edge (b) (a) edge (d);
\end{tikzpicture} \;$\glue$\; 
\begin{tikzpicture}[scale=0.8, baseline=-0.5ex, every node/.style={transform shape}]
    \coordinate (A) at (0,0.5);
    \node (a) at (A) {$ \ibullet a \vphantom{\ibullet}$};
    \node[right of = a] (c) {$ \vphantom{\ibullet}c\vphantom{\ibullet}$};
    \node[below of = a] (d) {$ \ibullet d\ibullet$};
    \path[precedence] (a) edge (c);
    \path[event-order] (a) edge (d) (d) edge (c);
\end{tikzpicture} \;=\;
\begin{tikzpicture}[scale=0.8, baseline=-0.5ex, every node/.style={transform shape}]
    \coordinate (A) at (0,0.5);
    \node (a) at (A) {$\vphantom{\ibullet}a$};
    \node[right of = a, xshift = 0.2cm] (c) {$c\vphantom{\ibullet}$};
    \node[below of = a, yshift = 0.05cm] (b) {$\ibullet b$};
    \node[right of = b, xshift = 0.2cm] (d) {$d\ibullet$};
    \path[precedence] (b) edge (d) (a) edge (c) (b) edge (c);
    \path[event-order] (a) edge (b) (a) edge (d) (c) edge (d);
\end{tikzpicture}
\end{center}

\begin{definition}[Gluing] \label{def-gluing}
    Let $P,Q \in \iipoms$ be two pomsets such that $P \cap Q = \tint{P} = \sint{Q}$.
    The gluing of $P$ and $Q$, denoted by $P \glue Q$, is defined as
    $ P \glue Q := ( R, \precedence_{R}, \eventorder_{R}, \sint{R}, \tint{R}, \labelling_{R} ) $ where:
    \begin{align*}
    R &\egual P \cup Q & \labelling_{R} &\egual \labelling_{P} \cup \labelling_{Q} \\
    \precedence_{R} &\egual \precedence_{P} \cup \precedence_{Q} \cup\; ( P \setminus \tint{P}) \times ( Q \setminus \sint{Q}) & \sint{R} &\egual \sint{P}\\
    \eventorder_{R} &\egual \eventorder_{P} \cup \eventorder_{Q} & \tint{R} &\egual \tint{Q}
    \end{align*}
\end{definition}

\subsection{ST decomposition}
\label{sec:ST-decomposition}
 
Starter-Terminator decomposition of pomsets is a tool introduced in \cite{AmraneBFZ23,FahrenbergZ23} to decompose 
a pomset as a gluing of elementary elements, \ie pomsets with empty precedence order.
This technique allows to describe pomsets over $\alphabet$ of dimension $k$ as finite words over a finite alphabet $\stset{k}$.
This makes it possible to lift results from words to pomsets.

A pomset $\pom$ is called \defs{discrete} when it has an empty precedence order.
In that case, the event order $\eventorder$ is a total order.
Thus, we will write discrete pomsets as lists of events, between square brackets, where the event order is omitted and goes implicitly from top to bottom.
For instance, $\pom = \pomset{\pinterface a \ibullet}{\ibullet b \pinterface}$ is a discrete pomset with two concurrent events $a \eventorder b$, where $\sint{\pom} = \{b\}$ and $\tint{\pom} = \{a\}$. 
A discrete pomset $\pom$ can be:
\begin{itemize}
\item a \defs{conclist} if $\sint{\pom} = \tint{\pom} = \emptyset$, 
\item a \defs{starter} if $\tint{\pom} = \pom$,
\item a \defs{terminator} if $\sint{\pom} = \pom$,
\item an \defs{identity} if it is both a starter and a terminator.
\end{itemize}

For example, $\pomset{a}{b}$ is a conclist, $\pomset{a \ibullet}{b \ibullet}$ is a starter, $\pomset{\ibullet a \pinterface}{\ibullet b \ibullet}$ is a terminator, and $\pomset{\ibullet a \ibullet}{\ibullet b \ibullet}$ is an identity.
Intuitively, a starter can only start new events: so, all events must belong to the terminating interface, because they must keep running,
hence the $\tint{\pom} = \pom$. Conversely, a terminator is allowed to terminate some events, but it cannot start new ones: all events were already running at the start, thus $\sint{\pom} = \pom$.
Note that the discrete pomset $\pomset{\pinterface a \ibullet}{\ibullet b \pinterface}$ is neither a starter nor a terminator, since it both starts~$a$ and terminates~$b$.
A conclist has no interface, it simply denotes an (ordered) list of events that are running concurrently, hence the name, short for concurrency list.

We denote the set of conclists by $\conc$.
We write $\stSet$ for the set of all starters and terminators, and $\stset{k}$ for the ones of dimension at most $k$.
Notice that since the alphabet $\alphabet$ is finite, the set $\stset{k}$ is also finite.
A finite word $\pom_1 \pom_2  \cdots \pom_n \in \stset{k}^*$ is called \defs{coherent} if $\tint{P_i} = \sint{P_{i+1}}$ for all $1 \leq i \leq n-1$.
When that is the case, we can glue the successive elements in the sequence to obtain a pomset $\pom_1 \glue \pom_2 \glue \cdots \glue \pom_n \in \iipomsk{k}$.
A coherent word on $\stset{k}$ is also called an \defs{ST-sequence}.
If $w \in \stset{k}^\ast$ is an ST-sequence, we write $\gluew{w} \in \iipomsk{k}$ for its associated pomset.

\begin{proposition}[ST decomposition \cite{FahrenbergZ23}] \label{def-ST-sequence}
Every pomset $\pom \in \iipomsk{k}$ can be decomposed as an ST-sequence: there exists $w \in \stset{k}^* $ such that $\pom = \gluew{w}$.
\end{proposition}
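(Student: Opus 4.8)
The plan is to decompose a pomset $\pom \in \iipomsk{k}$ by repeatedly peeling off, from the front, either a starter or a terminator, until nothing remains. Concretely, I would set up an induction on the cardinality of $\pom$, together with an auxiliary measure that accounts for the interfaces. The base case is the empty pomset, decomposed as the empty ST-sequence. For the inductive step, I would look at the starting interface $\sint{\pom}$: these events are already active at the beginning, and the natural first move is to terminate the ones that should finish earliest.

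More precisely, let me define the set $M$ of $\precedence$-minimal events of $\pom$ that do \emph{not} belong to $\sint{\pom}$ (these are events that start ``at the very beginning'' of $\pom$ but are not part of the incoming interface). I would split into two cases. If $M \neq \emptyset$, then the first element of the ST-sequence should be a \emph{starter} $S$ whose underlying conclist is $\sint{\pom} \cup M$ (ordered by $\eventorder_{\pom}$), with $\sint{S} = \sint{\pom}$ and $\tint{S} = \sint{\pom}\cup M$; the remaining pomset $\pom'$ is $\pom$ with its starting interface enlarged to $\sint{\pom}\cup M$, so that $\pom = S \glue \pom'$ and $\pom'$ has strictly fewer ``events not yet in the starting interface.'' If $M = \emptyset$, then every $\precedence$-minimal event of $\pom$ is already in $\sint{\pom}$; now I would pick a $\precedence$-maximal element $t$ among the $\precedence$-minimal ones — since $\precedence$ is an interval order, the minimal events form a chain-free set that I can ``cut'' consistently — and let the first element be a \emph{terminator} $T$ with underlying conclist $\sint{\pom}$, $\sint{T}=\sint{\pom}$, and $\tint{T} = \sint{\pom}\setminus\{t\}$ (or, more efficiently, terminating a maximal downward-closed antichain of minimal events at once). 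The residual pomset $\pom' = \pom \setminus \{t\}$ with starting interface $\sint{\pom}\setminus\{t\}$ satisfies $\pom = T \glue \pom'$ and has strictly fewer events. In both cases $\dim(\pom') \leq \dim(\pom) \leq k$, so the induction hypothesis applies and concatenating gives the desired word $w \in \stset{k}^\ast$.

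The key facts to check along the way are: (1) that the sets being declared as interfaces really are subsets of minimal/maximal elements in the relevant sub-pomsets — this uses that $\precedence$ is an interval order, which guarantees that the set of $\precedence$-minimal elements is ``linearly stacked'' so that removing a maximal one of them keeps the rest minimal; (2) that $S \glue \pom' = \pom$ and $T \glue \pom' = \pom$ on the nose, which is a direct unfolding of \Cref{def-gluing} (the transitive closure adds no new relations because the events we peel off are $\precedence$-incomparable to the rest only through being minimal, and in the starter case we carefully put the peeled events into the terminating interface so the product $(P\setminus\tint P)\times(Q\setminus\sint Q)$ stays empty); and (3) that the chosen termination measure — say, the pair (number of events, number of events not in the starting interface) ordered lexicographically — strictly decreases.

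The main obstacle I expect is the careful bookkeeping in case $M=\emptyset$: one must argue that among the $\precedence$-minimal elements there is always one that is ``safe to terminate first,'' i.e., that terminating it does not violate the requirement that terminator interfaces consist of elements that were running at the start. This is exactly where the interval-order hypothesis is essential: in an interval order, if $x$ and $y$ are both $\precedence$-minimal but $\upInt(x) < \upInt(y)$, then no event $z$ can satisfy $x \precedence z \precedence y$ in a way that blocks the cut, so one can terminate the event with the smallest upper endpoint first. Making this precise — possibly by choosing the interval representation $\lowInt,\upInt$ up front and always terminating the minimal event with least $\upInt$ — is the crux; everything else is routine verification against \Cref{def-gluing}.
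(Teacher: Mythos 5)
The paper does not actually prove this proposition: it is recalled from \cite{FahrenbergZ23} and stated without proof, so there is no in-paper argument to compare against. Judged on its own, your peeling argument is essentially the standard one and its core is sound. The crucial point — that in the terminator step one must remove the $\precedence$-minimal event $t$ with the least upper endpoint $\upInt(t)$, because then $\upInt(t)\leq\upInt(x)<\lowInt(y)$ for every non-minimal $y$ and every minimal $x\precedence y$, so $t\precedence y$ for \emph{all} $y\in\pom\setminus\sint{\pom}$ and the gluing of \Cref{def-gluing} reconstructs exactly $\precedence_\pom$ — is correctly identified, and your termination measure and the verification that the starter step adds no precedence pairs are both fine.

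Two points need repair. First, your phrase ``a $\precedence$-maximal element among the $\precedence$-minimal ones'' is vacuous: the minimal elements form an antichain, so $\precedence$ induces no order on them; the selection criterion really is ``least $\upInt$'' (the event ending \emph{first}, not last), which you only state in the final paragraph — and note this is the opposite of what ``maximal'' suggests. Second, the recursion as written does not terminate on identities: if $M=\emptyset$ and every remaining minimal event lies in $\tint{\pom}$ (e.g.\ $\pom$ is a nonempty identity, or becomes one after peeling), no event may be terminated — elements of $\tint{\pom}$ must still be running at the end — and no event may be started, so neither of your two cases applies and you never reach the empty base case. The fix is to add a third, terminal case: when $\pom$ is itself discrete with $\sint{\pom}=\pom$ (hence a terminator, possibly an identity), emit $\pom$ as a single letter of $\stset{k}$ and stop. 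One should also record, as you implicitly do, that every conclist produced is a $\precedence$-antichain of $\pom$ and hence has at most $k$ elements, so all letters indeed lie in $\stset{k}$. With these adjustments the argument is complete.
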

Let us take our running example of \Cref{fig-pomset-4} and express an ST-decomposition of this pomset:
\[
\begin{tikzpicture}[scale=0.9, baseline=-0.5ex, every node/.style={transform shape}]
	\coordinate (A) at (0,0.5);
	\node (a) at (A) {$\vphantom{\ibullet}a$};
	\node[right of = a, xshift = 0.2cm] (c) {$c\vphantom{\ibullet}$};
	\node[below of = a, yshift = 0.05cm] (b) {$\ibullet b$};
	\node[right of = b, xshift = 0.2cm] (d) {$d\ibullet$};
	\path[precedence] (b) edge (d) (a) edge (c) (b) edge (c);
	\path[event-order] (a) edge (b) (a) edge (d) (c) edge (d);
\end{tikzpicture}
\quad=\quad
\pomset{  \pinterface a \ibullet }{ \ibullet b \ibullet } 
\glue
\pomset{ \ibullet a \ibullet  }{\ibullet b \pinterface }
\glue
\pomset{ \ibullet a \ibullet  }{\pinterface d \ibullet }
\glue
\pomset{ \ibullet a \pinterface  }{\ibullet d \ibullet }
\glue
\pomset{ \pinterface c \ibullet }{\ibullet d \ibullet }
\glue
\pomset{ \ibullet c \pinterface}{\ibullet d \ibullet}
\]
An ST decomposition is called \defs{sparse} if it alternates between starters and terminators, and contains no identities.
For example, the ST decomposition given above is sparse: from left to right, we first start $a$, terminate $b$, start $d$, terminate $a$, start $c$, and terminate $c$.

In general, ST decompositions are not unique. Indeed, one can always add any number of identities (see example on the left); and when several events start at the same time, we can equivalently start one before the other, or both at once (see example on the right).
\[
\hfill
\pomset{a \ibullet}{b \ibullet}
\;=\;
\pomset{a \ibullet}{b \ibullet}
\glue
\pomset{\ibullet a \ibullet}{\ibullet b \ibullet}
\glue
\pomset{\ibullet a \ibullet}{\ibullet b \ibullet}
\hfill
\pomset{a \ibullet}{b \ibullet}
\;=\;
\pomset{a \ibullet}{}
\glue
\pomset{  \ibullet a \ibullet }{ \pinterface b \ibullet }
\;=\;
\pomset{}{b \ibullet}
\glue
\pomset{ \pinterface a \ibullet }{ \ibullet b \ibullet }
\hfill
\]
However, every pomset admits a unique sparse ST decomposition (see~\cite{FahrenbergZ23} for a proof).

\subsection{Monadic Second Order logic over pomsets}
\label{sec:MSO-pomsets}

We now recall the Monadic Second Order 
(\mso) logic over pomsets introduced in~\cite{AmraneBFF24}.
The main result of~\cite{AmraneBFF24} is a variant of Büchi's theorem for pomsets, which states that \mso logic captures the same class of pomset languages as higher dimensional automata.

    The syntax of \mso formulas for pomsets is generated by the grammar:
    \[
    \form,\psi ::= \neg \form \sep \form \wedge \psi \sep \exists x.\form \sep \exists X.\form \sep x \in X \sep
            a(x) \sep \starter(x) \sep \terminator(x) \sep x\precedence y \sep x\eventorder y
    \]
    where $a \in \Sigma$ is a letter of the alphabet, $x,y$ are first order variables and $X$ is a second-order variable.
    The symbols $\starter$ and $\terminator$ are unary predicates, meaning that $x$ belongs to the starting (resp.\ terminating) interface.
    The binary relation symbols $\precedence$ and $\eventorder$ stand for the precedence and event order.

\begin{definition}[Semantics of \mso over pomsets]
	\label{def:MSO-semantics}
    An \mso formula $\form$ is evaluated over a pomset $\pom = (\pom, \precedence_{\pom}, \eventorder_{\pom}, \sint{\pom}, \tint{\pom}, \labelling_{\pom})$, together with an interpretation function $\nu$.
    The function $\nu$ gives the interpretation of free variables of $\form$: first-order variables are mapped to events of~$P$, and second-order variables are mapped to sets of events of~$P$.
	The satisfaction relation $P,\nu \models \form$ is defined inductively as follows:
    \begin{align*}
        &P, \nu \models a(x)  \iif \labelling_P (\nu(x)) = a &
        &P, \nu \models x \in X  \iif \nu(x) \in \nu(X)\\
        &P, \nu \models \starter(x) \iif \nu(x) \in \sint{\pom} &
        &P, \nu \models \terminator(x)  \iif \nu(x) \in \tint{\pom} \\
        &P, \nu \models \neg \form \iif P, \nu \not\models \form &
        &P, \nu \models \form \wedge \psi \iif  P, \nu \models \form \aand P, \nu \models \psi \\
        &P, \nu \models x \precedence y  \iif \nu(x) \precedence_{\pom} \nu(y) &
        &P, \nu \models x \eventorder y  \iif \nu(x) \eventorder_{\pom} \nu(y)\\
        &P, \nu \models \exists x.\form \iif \exists p \in P\; \suchthat P, \nu[x \mapsto p] \models \form \\
        &P, \nu \models \exists X.\form \iif \exists Q \subseteq P\; \suchthat P, \nu[X \mapsto Q] \models \form
    \end{align*}
    We write $P \models \form$ when $\form$ does not have any free variables, and $\langof{\form} = \{P \in \iipoms \mid P \models \form \}$.
\end{definition}

In order to prove that \mso over pomsets is as expressive as higher dimensional automata, the authors of \cite{AmraneBFF24} used a detour via ST-sequences.
An \mso formula $\form$ over pomsets can be translated into a formula $\translated{\form}$ over ST-sequences that accepts the representations of the pomsets accepted by $\form$.
The precise statement of this translation is reproduced below, in \Cref{lemma:mso_translation}.
Recall that ST-sequences are simply words over a different alphabet $\stset{k}$, so \mso logic over ST-sequences is the standard \mso logic over finite words. %

\begin{lemma}[cf.~\protect{\cite[Lemma 12]{AmraneBFF24}}]\label{lemma:mso_translation}
    Let $\form$ be an \mso formula for pomsets without free variables. Then, for any $k \in \mathbb{N}$,
    there exists an \mso formula $\translated{\form}$ over $\stset{k}$, such that:
    \[\langof{\translated{\form}} = \{w \in (\stset{k} \setminus \{\mathrm{Id}_{\emptyset}\})^+ \mid w \textup{ is coherent and } \gluew{w} \models \phi \}\]
    where $\mathrm{Id}_{\emptyset}$ denotes the empty pomset.
\end{lemma}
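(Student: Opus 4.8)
The plan is to follow the word-encoding method of \cite{AmraneBFF24}. Fix $k$. Since an ST-sequence is literally a word over the finite alphabet $\stset{k}$, \mso over ST-sequences is ordinary \mso over finite words (with one unary predicate per letter of $\stset{k}$), and I would build $\translated{\form}$ by induction on $\form$: each atomic pomset predicate is replaced by an \mso word-formula that reads it off the decomposition, and the quantifiers of $\form$ are relativised so that a first-order variable of $\form$ --- an event of $\gluew{w}$ --- ranges over the correct object inside $w$. The one real design decision is how to represent an event of $\gluew{w}$ by data in $w$: it is not a single position (an event is born at some starter step, or already present in $\pom_1$, kept alive over several steps, and dies at some terminator step, or survives in $\pom_n$), and it is not even determined by its set of active positions (two distinct events may be active at exactly the same positions, e.g.\ when started and terminated together).

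As in \cite{AmraneBFF24} I would use \emph{occurrences}: an occurrence is a pair $(i,j)$, where $i$ is a position of $w$ and $j\le k$ is the rank of an event of the discrete pomset $\pom_i$ in its (total) event order. A set of occurrences is encoded by a $k$-tuple $(Z_1,\dots,Z_k)$ of set-of-positions variables, so quantifying over sets of occurrences is a fixed finite block of monadic second-order quantifiers. Two occurrences denote the same event of $\gluew{w}$ exactly when they are related by the reflexive--symmetric--transitive closure $\relation$ of the one-step relation $\relation_0$, where $(i,j)\relation_0(i{+}1,j')$ holds iff the $j$-th event of $\pom_i$ lies in $\tint{\pom_i}$ and is sent, by the order- and label-preserving bijection $\tint{\pom_i}=\sint{\pom_{i+1}}$, to the $j'$-th event of $\pom_{i+1}$. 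Since $\relation_0$ depends only on the pair of consecutive letters, it is first-order definable; and since every occurrence has at most one $\relation_0$-successor and at most one $\relation_0$-predecessor, each $\relation$-class is a chain, so $\relation$ is \mso-definable by the usual transitive-closure trick (there is an occurrence-set containing both occurrences that is $\relation_0$-closed in both directions and linearly ordered by $\relation_0$). The key combinatorial fact, which uses coherence of $w$, is then that the $\relation$-classes of $w$ biject with the events of $\gluew{w}$, that all occurrences in a class carry the same label, and that a class represents an event of $\sint{\gluew{w}}$ (resp.\ $\tint{\gluew{w}}$) iff it contains an occurrence at position $1$ whose rank lies in $\sint{\pom_1}$ (resp.\ at position $n$ with rank in $\tint{\pom_n}$), because $\sint{\gluew{w}}=\sint{\pom_1}$ and $\tint{\gluew{w}}=\tint{\pom_n}$ by \Cref{def-gluing}.

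With this, I would write down the needed \mso word-formulas on tuples $\vec Z$: $\mathrm{Class}(\vec Z)$ (``$\vec Z$ encodes a $\relation$-class''), $\mathrm{Sat}(\vec Z)$ (``$\vec Z$ encodes a union of $\relation$-classes''), $\mathrm{lab}_a(\vec Z)$, $\mathrm{Start}(\vec Z)$, $\mathrm{Term}(\vec Z)$, and binary $\mathrm{EOrd}(\vec Z,\vec Z')$ and $\mathrm{Prec}(\vec Z,\vec Z')$. For the event order, $\mathrm{EOrd}(\vec Z,\vec Z')$ says there is a position $i$ and ranks $j<j'$ with $i\in Z_j$ and $i\in Z'_{j'}$ (at a common step the event order is the rank order, and it is consistent across steps); this is a finite disjunction, hence first-order. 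For precedence, $\mathrm{Prec}(\vec Z,\vec Z')$ says the greatest position meeting $\vec Z$ is strictly smaller than the least position meeting $\vec Z'$; the nontrivial point is that this really computes $\precedence$ of $\gluew{w}$, which one checks by unfolding \Cref{def-gluing}: the base relation of the iterated gluing links events dying at step $i$ to events born at step $i{+}1$, and because $\gluew{w}$ is an interval pomset its precedence is an interval order, so this transitive closure collapses to ``dies strictly before the other is born''. The translation is then clause-by-clause: $a(x)\mapsto\mathrm{lab}_a(\vec Z^x)$, $\starter(x)\mapsto\mathrm{Start}(\vec Z^x)$, $\terminator(x)\mapsto\mathrm{Term}(\vec Z^x)$, $x\eventorder y\mapsto\mathrm{EOrd}(\vec Z^x,\vec Z^y)$, $x\precedence y\mapsto\mathrm{Prec}(\vec Z^x,\vec Z^y)$, $x\in X\mapsto$ ``every occurrence of $\vec Z^x$ lies in $\vec Z^X$'', $\neg$ and $\wedge$ transparently, $\exists x.\psi\mapsto\exists\vec Z^x\,(\mathrm{Class}(\vec Z^x)\wedge\translated{\psi})$ and $\exists X.\psi\mapsto\exists\vec Z^X\,(\mathrm{Sat}(\vec Z^X)\wedge\translated{\psi})$. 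Finally, for closed $\form$, the formula claimed in the lemma is the conjunction of $\translated{\form}$ (the inductive translation just described) with three first-order guards: the word is nonempty, no position carries $\mathrm{Id}_\emptyset$, and every pair of consecutive letters is compatible (i.e.\ $\tint{\pom_i}=\sint{\pom_{i+1}}$), the last two being purely local. Correctness is a routine induction on $\form$: writing $\hat\nu$ for the word-interpretation that sends each event to the tuple encoding its $\relation$-class and each event set to the union of the corresponding classes, one shows $\gluew{w},\nu\models\form$ iff $w,\hat\nu\models\translated{\form}$, using also that every word-interpretation satisfying the relativising guards is of this form.

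I expect the main obstacle to be the two coupled bookkeeping arguments rather than the induction: establishing the bijection between $\relation$-classes and events of $\gluew{w}$ together with the correct reading of labels and interfaces (which requires threading the interface-matching bijections $\tint{\pom_i}=\sint{\pom_{i+1}}$ through the whole sequence), and justifying the closed form for $\precedence$ on $\gluew{w}$ --- that the transitive closure built into the iterated gluing of \Cref{def-gluing} genuinely collapses to the ``last active position before first active position'' condition --- where the interval-order hypothesis is essential. Once these are in place, the word-formulas for the atoms and the inductive translation step are mechanical.
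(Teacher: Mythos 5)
Your reconstruction is correct and follows essentially the same route as the source the paper imports this lemma from (\cite{AmraneBFF24}), which the paper itself does not reprove: events are tracked as occurrence pairs $(i,j)$, the same-event relation $\relation$ is obtained as the transitive closure of a locally defined one-step relation via second-order quantification (exactly the feature the paper singles out as the obstacle to an FO version, motivating \Cref{lem-sim-FO}), and the clause-by-clause translation you describe is the one reproduced in the proof of \Cref{thm-FOP-to-FOST}. No gaps; the only cosmetic quibble is that the collapse of the iterated-gluing transitive closure to ``last active position $<$ first active position'' already follows from gluing discrete pieces, without needing to invoke the interval-order property.
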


\section{First Order logic over pomsets and ST-sequences}
\label{sec-FO}

In this section, our goal is to prove a variant of \Cref{lemma:mso_translation} for \fo formulas.
Unfortunately, the proof given in~\cite{AmraneBFF24} cannot be easily adapted: even if we start with an \fo formula~$\form$, the translation that they give yields a formula $\translated{\form}$ that contains second-order quantification.
This is because their translation makes extensive use of an MSO-definable relation $(x,i) \relation (y,j)$ that keeps track of the position of an event in an ST-sequence.
This relation relies on second-order quantification, and is used in several cases of the inductive translation.
In \Cref{sec-sim-relation}, we show that the relation $\relation$ can actually be expressed in first order.
The translation from \fo formulas on pomsets to \fo formulas on ST-sequences then follows in \Cref{thm-FOP-to-FOST}.

\subsection{First Order logics for pomsets}

First Order (FO) logic is obtained by removing the second-order quantification from the MSO logic described in \Cref{sec:MSO-pomsets}.
Given a finite alphabet $\alphabet$, the syntax of \defs{\fo formulas over pomsets} is generated by the following grammar:
\[
\form,\psi ::= \neg \form \sep \form \wedge \psi \sep \exists x.\form \sep
a(x) \sep \starter(x) \sep \terminator(x) \sep x\precedence y \sep x\eventorder y
\]
where $ x $ is a first order variable and $a \in \alphabet$.
The semantics is the same as the one of \Cref{def:MSO-semantics}, ignoring the two cases related to second-order variables.
When dealing with pomsets of dimension $\leq k$, we also consider \defs{\fo formulas over ST-sequences}.
Recall that an ST-sequence is simply a word over the finite alphabet $\stset{k}$, whose elements are starters/terminators. So this is the usual \fo logic over words, whose syntax is generated by:
\[
\form, \psi ::= \neg\form \mid \form \wedge \psi \mid \exists x.\form \mid P(x) \mid x < y
\]
where $ x $ is a first order variable and $P \in \stset{k}$.
It is interpreted over words $w \in \stset{k}^\ast$, with the usual semantics.
We write $\fok{k}$ for the set of \fo formulas over $\stset{k}$.

\subsection{The same-event relation $\relation$}
\label{sec-sim-relation}

When translating \fo formulas from pomsets to ST-sequences, a key difficulty is that first-order variables contain very different information in those two representations.
In an \fo formula evaluated over a pomset~$P$, a variable $x$ is interpreted as an element $p \in P$ of the pomset, i.e., an event.
However, for an $\fok{k}$ formula evaluated over ST-sequences, a variable $x$ is interpreted as a position in the sequence, labeled by a letter $U \in \stset{k}$.
Each such letter contains several events (up to $k$, the dimension of the language), but it also contains only a small portion of those events.
As seen in the example below, the event $a$ of the pomset depicted on the left spans over 4 letters of the ST-sequence on the right.
\begin{center}
	\scalebox{0.9}{
        \begin{tikzpicture}[baseline=(current bounding box.north), every node/.style={transform shape}]
            \coordinate (A) at (0,0);
            \node (a) at (A) {\color{red} $a$};
            \node[right of = a, xshift = 0.2cm] (c) {$c\vphantom{\ibullet}$};
            \node[below of = a, yshift = 0.05cm] (b) {$\ibullet b$};
            \node[right of = b, xshift = 0.2cm] (d) {$d\ibullet$};
            \path[precedence] (b) edge (d) (a) edge (c) (b) edge (c);
            \path[event-order] (a) edge (b) (a) edge (d) (c) edge (d);
            \node[above = .2cm of a, xshift=0.55cm] (Pomset) {Pomset};
        \end{tikzpicture}
	}
        \hspace{1cm}
        \unskip\vrule
        \hspace{1cm}
	\scalebox{0.9}{
        \begin{tikzpicture}[baseline=(current bounding box.north), every node/.style={transform shape}]
        \node (ST-seq) at (0,0.5) {ST-Sequence};
        \node[below = .2cm of ST-seq] (b) {$\displaystyle\pomset{\color{red}\pinterface a\ibullet}{\ibullet b\ibullet}
            {\pomset{\color{red}\ibullet a \ibullet}{\ibullet b\pinterface}}
        \pomset{\color{red}\ibullet a \ibullet}{\pinterface d \ibullet}
        \pomset{\color{red}\ibullet a \pinterface}{\ibullet d \ibullet}
        \pomset{\pinterface c \ibullet}{\ibullet d \ibullet}
        \pomset{\ibullet c \pinterface}{\ibullet d \ibullet}$};
	    \end{tikzpicture}
	}
\end{center}

Thus, to keep track of events in an ST-sequence, we will use pairs $(x,i)$ where $x$ is a first-order variable, selecting one starter/terminator in the ST-sequence; and $i$ is the position of the event that we are currently tracking in this starter/terminator.
Note that, since we are interested in pomsets of dimension $\leq k$, there are only finitely many possible values for $1 \leq i \leq k$.
Since the same event may span several starters/terminators, it may be the case that two different pairs $(x,i)$ and $(y,j)$ designate the same event, as in the example below:
\begin{align*}
	\pomset{\pinterface{\colora}\ibullet}{\ibullet {\colorb}\ibullet}
    \overbrace{\pomset{\ibullet {\colora} \ibullet}{\ibullet {\colorb}\pinterface}}^{(x,1)}
    \pomset{\ibullet {\colora} \ibullet}{\pinterface {\colord} \ibullet}
    \overbrace{\pomset{\ibullet {\colora} \pinterface}{\ibullet {\colord} \ibullet}}^{(y,1)}
    \pomset{\pinterface {\colorc} \ibullet}{\ibullet {\colord} \ibullet}
     \pomset{\ibullet {\colorc} \pinterface}{\ibullet {\colord} \ibullet}
\end{align*}

Hence, we will need to use the \defs{same-event relation} $(x,i) \relation (y,j)$, which is true
if and only if the $i$-th event of the evaluation of $x$ is the $j$-th event of the evaluation of $y$.
This is the same idea as in the proof of \Cref{lemma:mso_translation} found in~\cite{AmraneBFF24}; however, we will need to show that this relation $\relation$ can actually be defined using only first order formulas.
We do so in \Cref{lem-sim-FO},
by providing a counter-free automaton recognizing it.
Counter-free automata are restrictions of finite state automata whose languages are first-order definable.

\begin{definition}[\cite{McNaughton71}, Counter-Free Automata]
\label{def:counter-free}
Let $ \calA = (\alphabet, Q, q_0, F, \delta) $ be a deterministic finite-state automaton.
$ \calA $ is \defs{counter-free} if there exists a positive integer $ n $ such that 
for any non-empty word $ w \in \alphabet^* $ and for any state $q \in Q$,
the state-transition function %
satisfies the following equality: $\delta(q, w^n) = \delta(q, w^{n+1})$.
\end{definition}

\begin{proposition}[\cite{McNaughton71}]\label{prop-aperiodic-autom}
    Counter-free automata are as expressive as \fo logic over words.
\end{proposition}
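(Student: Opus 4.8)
The plan is to prove this classical result --- the Sch\"utzenberger--McNaughton--Papert characterisation --- through a chain of equivalences: $L\subseteq\Sigma^{*}$ is \fo-definable iff $L$ is star-free, iff the syntactic monoid $M(L)$ is aperiodic (finite with no non-trivial subgroup), iff $L$ is recognised by a counter-free automaton. I would dispatch the last equivalence first, as it is essentially definitional: the transition monoid of the minimal DFA $\mathcal{A}_L$ of $L$ is exactly $M(L)$, and the counter-free condition ``$\exists n\,\forall q\,\forall w.\ \delta(q,w^{n})=\delta(q,w^{n+1})$'' asserts precisely that every element $m$ of this monoid satisfies $m^{n}=m^{n+1}$, i.e.\ that it is aperiodic (taking $n$ to be the exponent of the transition monoid supplies the uniform bound required in the definition). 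Since aperiodicity passes to quotients and $\mathcal{A}_L$ is a quotient of the accessible part of any automaton recognising $L$, the existence of \emph{some} counter-free automaton for $L$ is equivalent to $M(L)$ being aperiodic.

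Next I would settle the two easy directions of the remaining equivalences. For ``star-free $\Rightarrow$ \fo'' I would induct on the star-free expression: Boolean operations become $\neg$ and $\wedge$, and a concatenation $L_1 L_2$ is captured by $\exists x.\,(\phi_1^{<x}\wedge\phi_2^{\ge x})$, where $\phi_i^{<x}$ and $\phi_i^{\ge x}$ are the relativisations to the positions strictly before / from $x$ of \fo sentences defining $L_i$ --- the relativisation of an \fo word-sentence to an interval being again \fo --- with the empty-factor corner cases handled by the sentence $\neg\exists x.\top$. For ``\fo $\Rightarrow$ $M(L)$ aperiodic'' I would use an Ehrenfeucht--Fra\"iss\'e argument: $\fo\subseteq\mso$ already makes $L$ regular and $M(L)$ finite, and if $\phi$ defines $L$ with quantifier rank $m$, then for every $n\ge 2^{m}$ and all words $u,v,w$ the Duplicator wins the $m$-round game between $uw^{n}v$ and $uw^{n+1}v$ (match the $u$- and $v$-parts, and use a binary-search strategy inside the $w$-blocks); hence these two words satisfy the same \fo sentences of rank $\le m$, so $uw^{n}v\in L\iff uw^{n+1}v\in L$ with a bound $n$ not depending on $w$, which is exactly aperiodicity of $M(L)$.

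The real content --- and the step I expect to be the main obstacle --- is the remaining implication ``$M(L)$ aperiodic $\Rightarrow$ $L$ star-free'', i.e.\ Sch\"utzenberger's theorem. The plan there is induction on $|M(L)|$: the trivial case gives $L\in\{\emptyset,\Sigma^{*}\}$, both star-free; otherwise, with $\eta\colon\Sigma^{*}\to M(L)$ the syntactic morphism, it suffices to show that each $\eta^{-1}(m)$ is star-free. One does this by choosing a $\mathcal{J}$-maximal element among those that occur and analysing it via Green's relations --- aperiodicity forcing the $\mathcal{H}$-classes to be singletons --- so as to rewrite $\eta^{-1}(m)$ as a Boolean combination of languages of the form $U\,a\,V$, where $a\in\Sigma$ and $U,V$ are recognised by aperiodic monoids strictly smaller than $M(L)$ (hence star-free by the induction hypothesis), intersected with preimages under the corresponding smaller morphisms. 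Checking that this decomposition is exact and that the recursion is well-founded is the delicate part; the other three links of the chain are comparatively mechanical, and assembling the four equivalences then yields the statement.
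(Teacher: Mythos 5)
The paper does not prove this proposition at all --- it is imported wholesale from McNaughton and Papert --- so there is no in-paper argument to compare yours against; what you have written is a reconstruction of the classical Sch\"utzenberger--McNaughton--Papert proof, and the chain of equivalences is correctly assembled. The link between counter-free automata and aperiodicity of the syntactic monoid is, as you say, essentially definitional once one observes that the transition monoid of the minimal DFA \emph{is} the syntactic monoid and that aperiodicity passes to quotients; note also that the uniform bound $n$ in the paper's definition of counter-free is exactly the exponent of that finite monoid, so the two formulations agree. The star-free-to-\fo{} induction via relativisation is sound. In the \fo-to-aperiodicity direction, the Ehrenfeucht--Fra\"{\i}ss\'e argument is the right tool, but ``match the $u$- and $v$-parts and binary-search inside the $w$-blocks'' glosses over how Duplicator handles Spoiler moves that land \emph{inside} a $w$-block near the boundary with $u$ or $v$; the clean fix is to first prove $w^{n}\equiv_m w^{n+1}$ for $n\ge 2^m$ and then invoke the composition lemma ($u\equiv_m u'$ and $v\equiv_m v'$ imply $uv\equiv_m u'v'$) to restore the contexts, rather than playing one global game.

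The one place where your text is a plan rather than a proof is Sch\"utzenberger's direction (aperiodic $\Rightarrow$ star-free). You correctly identify the induction on $|M(L)|$, the reduction to showing each $\eta^{-1}(m)$ star-free, and the role of Green's relations with trivial $\mathcal{H}$-classes, but the exact decomposition of $\eta^{-1}(m)$ into boundary languages of the form $UaV$ and the verification that the recursion is well-founded are precisely where all the content of the theorem lives, and you explicitly defer them. Since the paper itself treats the proposition as a black box, this is an acceptable level of detail for the present purpose, but be aware that what you have is an accurate roadmap of the classical proof rather than a self-contained one.
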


\begin{lemma}\label{lem-sim-FO}
    Fix $i,j \in \{1, \dots, k\}$. Then, the binary relation $(x, i) \relation (y,j)$
is definable by an $\fok{k}$ formula with two free variables $x$ and $y$.

\end{lemma}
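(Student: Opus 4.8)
The plan is to follow the hint given in the text: rather than exhibiting an $\fok{k}$ formula directly, I will build a deterministic finite automaton $\calA_{i,j}$ over the alphabet $\stset{k}$, augmented so as to read also the positions of the two free variables $x$ and $y$, and then argue that $\calA_{i,j}$ is counter-free; \Cref{prop-aperiodic-autom} then yields the desired $\fok{k}$ formula. Concretely, to handle the two free variables in the automata-theoretic framework, I work over the extended alphabet $\stset{k} \times \{0,1\} \times \{1,\dots,k,\bot\} \times \{0,1\} \times \{1,\dots,k,\bot\}$, where the $\{0,1\}$ flags mark the (unique) position where $x$ (resp.\ $y$) is read and the $\{1,\dots,k,\bot\}$ component records the index inside that letter; a first syntactic check (definable by a counter-free automaton, in fact a very small one) rejects inputs where the flag for $x$ is not raised exactly once, or is raised at a letter in which position $i$ does not exist, and similarly for $y$ and $j$. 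On the well-formed inputs, the content of $\calA_{i,j}$ is to decide whether the $i$-th event of the letter flagged by $x$ and the $j$-th event of the letter flagged by $y$ are occurrences of the same event of $\gluew{w}$.

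The key observation driving the construction is that ``being the same event'' is a purely local, forward-propagating notion along an ST-sequence: an event present at some index in a starter/terminator $U_\ell$ is the same as an event present at some index in the next letter $U_{\ell+1}$ exactly when the coherence gluing $\tint{U_\ell} = \sint{U_{\ell+1}}$ identifies them, and the correspondence between indices across one gluing step is determined entirely by $U_\ell$ and $U_{\ell+1}$ (it is the order-isomorphism between the event-order-sorted list of $\tint{U_\ell}$ and that of $\sint{U_{\ell+1}}$). Hence I design $\calA_{i,j}$ to operate in three phases. Before the $x$-flag is seen it simply checks coherence and remembers nothing about events (state: ``waiting for $x$'', plus the previous conclist needed for the coherence check). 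When it reads the $x$-flagged letter it additionally starts \emph{tracking} index $i$: from now on its state carries a value in $\{1,\dots,k,\texttt{dead}\}$ giving the current index of the tracked event in the letter just read, updated at each step by the index-transport map of the gluing, and set to \texttt{dead} once the tracked event terminates (leaves the terminating interface). When it reads the $y$-flagged letter, it compares the tracked index with $j$: if the tracked value equals $j$ it moves to a sink accepting state, otherwise to a sink rejecting state; it also rejects if $y$ occurs strictly before $x$ in the sequence (handled by the flag-order check) or if the tracked event is already \texttt{dead} when $y$ is reached. Correctness of $\calA_{i,j}$ is then a straightforward induction on the number of gluing steps between the $x$-letter and the $y$-letter, using the definition of $\glue$ (\Cref{def-gluing}) to see that the index-transport map composed along the path sends $i$ to $j$ precisely when the two designated events coincide in $\gluew{w}$.

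The remaining, and genuinely delicate, point is to verify that $\calA_{i,j}$ is counter-free, i.e.\ that $\delta(q,w^n)=\delta(q,w^{n+1})$ for $n$ large enough, uniformly in $q$ and in the nonempty word $w$ over the extended alphabet. I expect this to be the main obstacle. The reason it works is that none of the state components involve modular counting: the ``phase'' component ($\{$waiting-for-$x$, tracking, accepted, rejected$\}$) can only move forward, so any long repetition $w^n$ either stays in one phase forever or makes one definitive transition; the coherence-check component depends only on the last conclist read and is therefore idempotent under repetition (reading $w$ twice leaves the same trailing conclist as reading it once, once a full copy of $w$ has been consumed); the accepting/rejecting sinks are absorbing; and the only subtle piece — the tracked-index component during the tracking phase — evolves by the index-transport maps, but since the ``dead'' value is absorbing, after the first full copy of $w$ the tracked index is either stuck at \texttt{dead} forever, or $w$ maps it to a value from which reading $w$ again is idempotent (because a nonempty word $w$ either kills the tracked event within one copy or transports every surviving index to a fixed point of the one-copy transport map — there is no cycle of length $>1$ among live indices, as iterating a fixed permutation-like partial map on a finite set without the possibility of revisiting a started-then-terminated event forces stabilization). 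Making this last sentence rigorous is where the work lies: I will argue that for any nonempty extended word $w$, the partial map ``transport the tracked index through one copy of $w$, or \texttt{dead}'' is, after one application, either constantly \texttt{dead} or a map $f$ with $f\circ f = f$ on its image, so $\delta(q,w^2)=\delta(q,w^3)$ for every $q$, giving $n=2$. Combined with the trivial counter-freeness of the small well-formedness checks and closure of counter-free (equivalently, aperiodic) languages under Boolean operations, this establishes that the language recognized by $\calA_{i,j}$ — namely the graph of $(x,i)\relation(y,j)$ — is $\fok{k}$-definable, proving \Cref{lem-sim-FO}.
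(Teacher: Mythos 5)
Your overall strategy coincides with the paper's: build a deterministic automaton over (an encoding of) $\stset{k}$ that tracks the index of the designated event through the ST-sequence, prove it counter-free, and invoke \Cref{prop-aperiodic-autom}. The differences in encoding (marked letters for the free variables versus the paper's relativization of a closed formula to the factor between $x$ and $y$, and the paper's extra parameter $a$ for the label of the tracked event) are immaterial. The gap is in the one step you yourself flag as delicate: the claim that, for any nonempty $w$, the one-copy index-transport map $f$ satisfies $f\circ f=f$ on its image, hence that $n=2$ witnesses counter-freeness. This is false. Take $k=3$ and let $w$ consist of a terminator that ends the first of three concurrent $a$'s, followed by a starter that launches a fresh $a$ placed \emph{last} in the event order. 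After reading $w$ all three positions are again occupied by live $a$'s, but the two surviving events have shifted up: $f(3)=2$, $f(2)=1$, $f(1)=\texttt{dead}$. So $f$ has no live fixed point, is not idempotent on its image, and $\delta(q,w^2)\neq\delta(q,w^3)$ when the tracked index starts at $3$; this is exactly the example displayed in the paper's own proof. Your parenthetical justification (``a nonempty word either kills the tracked event within one copy or transports every surviving index to a fixed point'') is precisely the assertion that fails.

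What is true, and what the paper proves, is that the sequence of tracked indices under iteration of $w$ is \emph{strictly monotone} as long as it stays live: if one copy of $w$ moves the index from $\ell$ to $\ell'<\ell$, then the next copy moves it to some $\ell''<\ell'$, because the event order between two events that both survive the gluing is preserved, so their relative positions in successive conclists cannot swap. Monotonicity does rule out cycles (so your ``no cycle of length $>1$'' intuition is correct), but it only forces stabilization or death after at most $k$ iterations, not one. The correct constant is therefore $n=k+2$ (one step to leave the initial phase, up to $k$ steps of drift, one absorbing step), and with that replacement --- the monotone-drift argument in place of the idempotence claim --- your construction and the rest of your proof go through.
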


\begin{proof}[Proof (Sketch)]
Given an event $a$ and two indexes $i,j \in \{1, \dots, k\}$, we build a counter-free automaton $\automata_{i,j,a}$ that
scans an
ST-sequence $P_1 P_2 \cdots P_n$ and accepts if and only if the $i$-th element of $P_1$ is the same $a$-event as the $j$-th element of $P_n$.
\Cref{fig-automata-sim} depicts an example of such an automaton, with event alphabet $\Sigma = \{a, b\}$ and pomsets of dimension at most $k = 2$.
The $a$-event that the automaton is currently following is depicted in red on each starter/terminator that it reads.
Inside the states, we keep track of the list of currently active events (\ie a
conclist $U \in \concb{k}$), and the position ($i \in \{1, \dots, k\}$) of the
followed event.
Hence, the set of states is $\concb{k} \times \{1, \dots, k\}$, plus an initial, a final and a sink state. See \Cref{app:proof-lem-sim-FO} for the precise formal definition of the automaton.

\begin{figure}
    \centering
    \begin{tikzpicture}
    [auto, node distance=2cm, initial text=, every state/.style={minimum size = 1.1cm}]
        \node[state, initial]   (bot)                       {$\bot$};
        \node                   (1)   [right = of bot]       {};
        \node[state]            (a1)  [right = 1.7cm of 1]      {$a,1$};
        \node[state]            (aa1) [below = 1cm of 1]       {$aa,1$};
        \node[state]            (ab1) [above = 1cm of 1]       {$ab,1$};
        \node                   (2)   [right = of a1]       {};
        \node[state, accepting] (aa2) [below  = 1cm of 2]    {$aa,2$};
        \node[state, accepting] (ba2) [above = 1cm of 2]    {$ba,2$};
        \node[state, accepting] (top) [right = 1.7cm of 2]  {$\top$};

        \path[-latex, inner sep=1pt]
            (bot)   edge                    node [swap]         {$\pomset{(\interface)\follow{a}\interface}{(\interface)a\interface}$\hphantom{-}} (aa1)
                    edge                    node [pos=0.3]         {$[\follow{a} \interface]$} (a1)
                    edge                    node []        {\hphantom{-}$\pomset{(\interface)\follow{a}\interface}{(\interface)b\interface}$} (ab1)
            (aa1)   edge [bend left=.4cm]   node []        {$\pomset{\interface \follow{a} \interface}{\interface a \pinterface}$} (a1)
            (ab1)   edge [bend left=.4cm]   node []        {$\pomset{\interface \follow{a} \interface}{\interface b \pinterface}$} (a1)
            (a1)    edge [bend left=.4cm]   node []        {$\pomset{\interface \follow{a} \interface}{\pinterface a \interface}$} (aa1)
                    edge [bend left=.4cm]   node []        {$\pomset{\interface \follow{a} \interface}{\pinterface b \interface}$} (ab1)
                    edge [bend right=.4cm]  node [swap]         {$\pomset{\pinterface a\interface}{\interface \follow{a} \interface}$} (aa2)
                    edge [bend right=.4cm]  node [swap]   {$\pomset{\pinterface b\interface}{\interface \follow{a} \interface}$\hphantom{\quad}} (ba2)
            (aa2)   edge [bend right=.4cm]  node [swap]        {$\pomset{\interface a \pinterface}{\interface \follow{a} \interface}$} (a1)
                    edge                    node [swap]         {$\pomset{\interface a (\interface)}{\interface \follow{a} \phantom{(\interface)}}$} (top)
            (ba2)   edge [bend right=.4cm]  node [swap]        {$\pomset{\interface b \pinterface}{\interface \follow{a} \interface}$} (a1)
                    edge                    node []        {$\pomset{\interface b (\interface)}{\interface \follow{a} \phantom{(\interface)}}$} (top);
\end{tikzpicture}
    \caption{$\automata_{1,2,a}$, sink state and identities not drawn.}
    \label{fig-automata-sim}
\end{figure}

To prove that all $\automata_{i,j,a}$ are counter-free, consider a state
$(U, \ell)$, and an ST-sequence $w = P_1 \cdots P_n$.
We need to show that $\automata_{i,j,a}$ will never fall in a non-trivial cycle
when reading $w$ repeatedly from $(U,\ell)$. There are several cases:
\begin{itemize}
    \item If $\sint{P_1} \neq U$ or $\tint{P_n} \neq U$, then the execution fails
    and falls in a sink state after one or two iterations.
    \item If $\sint{P_1} = \tint{P_n} = U$ and the $\ell$-th element of $P_1$
    is the $\ell$-th element of $P_n$, then we are in a trivial cycle (an execution reading
    $w$ from $(U,\ell)$ arrives in $(U,\ell)$)
    \item If $\sint{P_1} = \tint{P_n} = U$ and the $\ell$-th element of $P_1$
    is the $\ell'$-th element of $P_n$, with $\ell > \ell'$ (the opposite case
    is similar), then $\ell$ will keep decreasing strictly as we iterate~$w$.
    The example below illustrates what might happen, with a pomset of dimension $k=3$. We start in state $(aaa,3)$, so three $a$'s are running concurrently, and we are tracking the third one (in red).
    When the automaton reads the word $w$ below, the first $a$ is terminated, and another $a$ is started, but this new $a$ is placed after the other two according to event order.
    The tracked $a$ ends up in position 2, so the state of the automaton is now $(aaa,2)$.
    After decreasing a finite number of times, the execution arrives
    in the sink state -- and fails.
    {\[
        w = \left[\begin{array}{ccc}
        \ibullet a \pinterface \\ \color{blue} \ibullet a \ibullet \\ \color{red} \ibullet a \ibullet
        \end{array}\right]
        \left[\begin{array}{ccc}
        \color{blue} \ibullet a \ibullet \\ \color{red} \ibullet a \ibullet \\ \pinterface a \ibullet
        \end{array}\right]
    \qquad\qquad
        \begin{array}{l}
        \delta((aaa, 3), w) = (aaa, 2)\\[6pt]
        \delta((aaa, 3), w^2) = (aaa, 1)
        \end{array}
        \]}

\end{itemize}
Hence, from any state $(U,\ell)$, the execution enters a trivial cycle
in at most $k+1$ steps. Since any execution leaves the initial state in
one step, $\automata_{i,j,a}$ is counter-free (by taking $n = k+2$ in \cref{def:counter-free}).
So $\automata_{i,j,a}$ can be expressed as an \fo formula, and with a disjunction over all $a \in \Sigma$ we get an \fo formula for $(x,i) \sim (y,j)$. See \Cref{app:proof-lem-sim-FO} for details.
\end{proof}

\subsection{Translation of \fo formulas from pomsets to ST-sequences}

Now that we have proven that the same-event relation $\relation$ is \fo-definable, we can inductively translate \fo formulas on pomsets to $\fok{k}$ formulas on ST-sequences. %
The inductive definition of $\translated{\form}$ is the same as the one of~\cite{AmraneBFF24}.
We reproduce it in \Cref{app:proof-thm-FOP-to-FOST} for completeness.

\begin{theorem}\label{thm-FOP-to-FOST}
    
Let $\form$ be an \fo formula over pomsets without free variables.
Then, for any $k \in \mathbb{N}$, there exists an $\fok{k}$ formula $\translated{\form}$ over $\stset{k}$ such that:
\[
\langof{\translated{\form}} = \{w \in (\stset{k} \setminus \{\mathrm{Id}_{\emptyset}\})^+ \mid w \textup{ is coherent and } \gluew{w} \models \form \}
\]

\end{theorem}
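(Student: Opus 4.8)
The plan is to carry out exactly the inductive translation of \cite{AmraneBFF24} that underlies \Cref{lemma:mso_translation}, but to replace every appearance of the MSO‑definable same‑event relation by its \fo‑definable avatar from \Cref{lem-sim-FO}; since $\relation$ is now an $\fok{k}$ formula, the resulting translation never introduces second‑order quantification and stays within $\fok{k}$. Because subformulas of $\form$ carry free variables, I would actually prove by induction on the structure of the formula the following stronger statement: for every \fo formula $\psi(x_1,\dots,x_n)$ over pomsets and every tuple $\bar\imath=(i_1,\dots,i_n)\in\{1,\dots,k\}^n$, there is an $\fok{k}$ formula $\psi^{\bar\imath}(x_1,\dots,x_n)$ over $\stset{k}$ such that, for every coherent $w\in\stset{k}^{+}$ and every assignment $\mu$ sending each $x_\ell$ to a position of $w$ whose letter has at least $i_\ell$ events, $w,\mu\models\psi^{\bar\imath}$ iff $\gluew{w},\nu\models\psi$, where $\nu(x_\ell)$ is the $i_\ell$‑th event of the letter $w_{\mu(x_\ell)}$. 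The theorem then follows by setting $\translated{\form}:=\phi_{\mathrm{wf}}\wedge\form^{()}$, where $\phi_{\mathrm{wf}}$ is a closed $\fok{k}$ formula asserting that $w$ is nonempty, avoids $\mathrm{Id}_{\emptyset}$, and is coherent — coherence being a local constraint on consecutive letters ($\bigvee_{\tint{U}=\sint{V}}U(x)\wedge V(y)$ for $y$ the successor of $x$), hence \fo.

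\paragraph{Atomic cases.}
Each atom of the pomset logic is translated into an $\fok{k}$ formula, using $\relation$ whenever an event must be recognised across several letters. The label atom $a(x)$ becomes the finite disjunction $\bigvee_{U}U(x)$ over letters $U\in\stset{k}$ whose $i$‑th event is labelled $a$: a purely local property of the letter $x$. For $x\eventorder y$, one checks from \Cref{def-gluing} that the event order of $\gluew{w}$ is the union of the top‑to‑bottom orders of the individual letters, so $x\eventorder y$ translates to ``$\exists z.\bigvee_{i'<j'}(x,i)\relation(z,i')\wedge(y,j)\relation(z,j')$'', i.e.\ the two events co‑occur in some letter with the correct order. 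For $x\precedence y$, one checks that $p\precedence q$ in $\gluew{w}$ iff the letter that terminates $p$ strictly precedes the letter that starts $q$ (no letter of $\stset{k}$ can do both, being a starter or a terminator), which gives ``$\exists z.\exists z'.\,z<z'\wedge(\text{$z$ terminates }(x,i))\wedge(\text{$z'$ starts }(y,j))$'', where ``terminates/starts $(x,i)$'' is again a local letter predicate composed with $\relation$. Finally $\starter(x)$ uses $\sint{\gluew{w}}=\sint{w_1}$: it becomes ``$\exists z.\,\mathrm{first}(z)\wedge\bigvee_{i'}(z,i')\relation(x,i)\wedge(\text{$z$'s $i'$‑th event lies in }\sint{z})$'', i.e.\ tracing the event back to the first letter it must already be in that letter's starting interface; symmetrically for $\terminator(x)$ with the last letter and $\tint{w_n}$. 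Here $\mathrm{first}$, $\mathrm{last}$ and the successor relation are the usual \fo‑definable word predicates.

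\paragraph{Inductive step.}
Boolean connectives translate componentwise: $(\neg\psi)^{\bar\imath}:=\neg(\psi^{\bar\imath})$ and $(\psi_1\wedge\psi_2)^{\bar\imath}:=\psi_1^{\bar\imath}\wedge\psi_2^{\bar\imath}$, restricting $\bar\imath$ to the free variables of each conjunct. For the quantifier, $(\exists x.\psi)^{\bar\imath}:=\exists x.\bigvee_{i\in\{1,\dots,k\}}\bigl(\mathrm{valid}(x,i)\wedge\psi^{(\bar\imath,i)}\bigr)$, where $\mathrm{valid}(x,i)$ is the local letter predicate ``the letter $x$ has at least $i$ events''. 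Correctness here rests on two facts: every event of $\gluew{w}$ is the $i$‑th event of at least one letter for some $i$ (so no event of the pomset is missed by the word‑quantifier), and — since a single event typically has many $(x,i)$ representatives — the $\relation$ occurrences inside $\psi^{(\bar\imath,i)}$ correctly detect when the chosen representative $(x,i)$ denotes the same event as an outer variable, so that quantifying over representatives is faithful to quantifying over events.

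\paragraph{Main obstacle.}
The only genuinely delicate work is (i) getting the atomic cases exactly right with respect to interfaces and the interval‑order semantics of $\precedence$ — these are precisely the points where the global structure of the pomset (its first and last letters, the non‑overlap of intervals) interacts with the local letter‑by‑letter reading — and (ii) verifying that the invariant involving the index tuples $\bar\imath$ survives the $\exists$ step. Neither is a new difficulty: this is the same induction as in \cite{AmraneBFF24}, the sole improvement being that \Cref{lem-sim-FO} makes every occurrence of $\relation$ an $\fok{k}$ formula, so the produced $\translated{\form}$ remains first order.
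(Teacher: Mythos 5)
Your proposal is correct and takes essentially the same route as the paper: the same position-indexed inductive translation from \cite{AmraneBFF24} (your tuples $\bar\imath$ are the paper's indexing function $\tau$), with every occurrence of the same-event relation $\relation$ replaced by its \fo-definable version from \Cref{lem-sim-FO}, and with atomic cases that are equivalent reformulations of the paper's (e.g.\ your ``last letter of $x$ precedes first letter of $y$'' for $\precedence$ is equivalent to the paper's ``all occurrences of $x$ precede all occurrences of $y$''). The only cosmetic difference is that you handle the exclusion of $\mathrm{Id}_{\emptyset}$ and the validity guard on positions slightly more explicitly than the paper does.
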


\noindent
In \Cref{app:proof-thm-FOP-to-FOST}, we show that in the worst case,
the size of $\translated{\form}$ is $O(k^{|\form|})$, where $|\form|$ is
the size of $\form$.
As a corollary of \Cref{thm-FOP-to-FOST}, we can can show that \fo is strictly weaker than \mso for pomset languages.

\begin{corollary}\label{cor-FO<MSO}
    \fo is strictly weaker than \mso.
\end{corollary}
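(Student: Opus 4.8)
The plan is to first dispose of the trivial direction and then build an explicit witness. The inclusion $\fo \subseteq \mso$ holds because \fo is literally a syntactic fragment of \mso, interpreted with the same semantics (\Cref{def:MSO-semantics} minus the two second-order cases). So the whole content is to exhibit a pomset language that is \mso-definable but not \fo-definable. I would take the simplest possible witness: fix the one-letter alphabet $\alphabet = \{a\}$ and let $K$ be the set of pomsets isomorphic to a word $a^{2n}$ for some $n \geq 1$, i.e.\ pomsets whose precedence is a total order, all of whose events are labelled $a$, whose starting and terminating interfaces are empty, and which have a (nonzero) even number of events.

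Next I would observe that $K$ is \mso-definable. The properties \enquote*{$\precedence$ is total}, \enquote*{every event is labelled $a$}, and \enquote*{both interfaces are empty} are even \fo-expressible using $\precedence$, $a(\cdot)$, $\starter(\cdot)$, $\terminator(\cdot)$; and \enquote*{the number of events is even and nonzero} is the standard \mso sentence over a linear order (guess a set $X$ containing the $\precedence$-minimal event, whose membership alternates along the successor relation, and which does \emph{not} contain the $\precedence$-maximal event). Their conjunction is an \mso sentence $\psi_K$ with $\langof{\psi_K} = K$. (Alternatively, one may appeal to the B\"uchi-type theorem of~\cite{AmraneBFF24}, since $K$ is a regular word language.)

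Then I would show, by contradiction, that $K$ is not \fo-definable — and here is where \Cref{thm-FOP-to-FOST} does the essential work, since \fo over pomsets is a priori more expressive than \fo over words (it has the event order, the interfaces, and the interval-order constraint on $\precedence$). Suppose $K = \langof{\form}$ for an \fo sentence $\form$ over pomsets. Applying \Cref{thm-FOP-to-FOST} with $k = 1$ yields an $\fok{1}$ sentence $\translated{\form}$ over $\stset{1}$ with
\[
\langof{\translated{\form}} = \{\, w \in (\stset{1} \setminus \{\mathrm{Id}_{\emptyset}\})^+ \mid w \text{ coherent and } \gluew{w} \in K \,\}.
\]
Let $S = \pomsetOneDim{a \ibullet}$ be the $a$-starter and $T = \pomsetOneDim{\ibullet a}$ the $a$-terminator, so that $\tint{S} = \sint{T}$ (a single $a$) and $\tint{T} = \sint{S} = \emptyset$. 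Hence for every $m \geq 1$, the word $(ST)^m$ lies in $(\stset{1}\setminus\{\mathrm{Id}_{\emptyset}\})^+$, is coherent, and glues to the totally ordered word $\gluew{(ST)^m} = a^m$ with empty interfaces; therefore $(ST)^m \in \langof{\translated{\form}}$ if and only if $m$ is even. By \Cref{prop-aperiodic-autom}, $\langof{\translated{\form}}$ is recognized by a counter-free automaton $\calA$ with counter bound $n$; instantiating the counter-freeness condition at the initial state of $\calA$ and the word $ST$ gives $\delta(q_0, (ST)^n) = \delta(q_0, (ST)^{n+1})$, so $(ST)^n$ and $(ST)^{n+1}$ are accepted or rejected together — contradicting the fact that exactly one of $n$ and $n+1$ is even. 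Thus $K$ is not \fo-definable, and since $K$ is \mso-definable, $\fo$ is strictly weaker than $\mso$.

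I do not expect a genuine obstacle here: once \Cref{thm-FOP-to-FOST} and the equivalence between \fo and counter-free automata over words are available, the argument is mechanical. The only points requiring a little care are choosing a witness $K$ whose \mso-definability is transparent, and checking that the parity obstruction is preserved under the ST-encoding — which it is, because the iterated block $ST$ glues to exactly one additional $a$.
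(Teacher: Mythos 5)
Your proof is correct and follows essentially the same route as the paper: exhibit an ``even number of gluings'' witness language, push it through \Cref{thm-FOP-to-FOST}, and conclude from the non-FO-definability of parity over words. The only differences are cosmetic --- you take a one-dimensional witness $a^{2n}$ instead of the paper's two-dimensional one, and you spell out both the counter-free-automaton argument (which the paper delegates to a citation of Diekert--Gastin) and the MSO-definability of the witness (which the paper leaves implicit).
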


\begin{proof}
    Consider the language of pomsets $L = \left\{\pomset{a}{b}^{2n}  \sep n \in \mathbb{N}\right\}$, where the exponent denotes gluing iteration.
    Viewed as a language of ST-sequences (i.e., words on $\stset{2}$), this corresponds to
    $W = \left\{\left(\pomset{a\interface}{b\interface} \pomset{\interface a}{\interface b}\right)^{2n}\mid n \in \mathbb{N}\right\}$.
    As a language of words, $W$ is not definable in $\fok{k}$
    (see e.g.~\cite{DiekertG08} for a proof).
    By the contrapositive of \Cref{thm-FOP-to-FOST}, $L$ cannot be defined in~\fo.
\end{proof}

\section{A Linear Temporal Logic for Pomsets}
\label{sec-ltl}
In this section, we introduce our Linear Temporal Logic for pomsets, \lsptl.
Using the terminology of~\cite{DiekertG06local-global}, it is a \emph{local} temporal logic, meaning that it is interpreted over a single event of the pomset, rather than on a global state.
However, since events can span a long period of time, they are split into so-called \emph{sub-events} (see \Cref{fig-ltl-example}).
While defining a temporal logic directly over events might seem more straightforward, we show in \Cref{app:sub-events} that such a
logic is strictly less expressive than \fo over pomsets.
Nonetheless, we will show in \Cref{sec-ltl-expressivity}, \Cref{thm-LTLP--FOP} that with sub-events, \lsptl is as expressive as \fo over pomsets of dimension $\leq k$.

\subsection{Syntax and Semantics}
\label{sec-ltl-def}

Let us first give an intuitive explanation of the notion of sub-event, relying on the interval representation of a pomset. 
Consider the pomset $P$ depicted in \Cref{fig-ltl-example}.
The interval representation of $P$ is decomposed in three ``slices'' (delimited by dotted lines).
Each slice corresponds to two consecutive elements of the sparse ST-decomposition of $P$, i.e., a starter and a terminator.
Thus, in every slice, some events start, then some events terminate.
Intuitively, a sub-event is given by an event $x \in P$ together with one of the slices it crosses.
For instance, event $a$ spans two slices: therefore, it is divided into two sub-events $a_1$ and $a_2$.

\begin{figure}[h]
    \centering
    \begin{tikzpicture}[baseline=-12ex,scale=1, every node/.style={transform shape}]
    \coordinate (A) at (0,0);
    \node (a) at (A) {$a$};
    \node[right of = a, xshift = 0.2cm] (c) {$c\vphantom{\ibullet}$};
    \node[below of = a, yshift = 0.05cm] (b) {$\ibullet b$};
    \node[right of = b, xshift = 0.2cm] (d) {$d \ibullet$};
    \path[precedence] (b) edge (d) (a) edge (c) (b) edge (c);
    \path[event-order] (a) edge (b) (a) edge (d) (c) edge (d);
\end{tikzpicture}
\hspace{2cm}
\begin{tikzpicture}[scale=1.2]

    \def\possh{-1.3};
    \def\hw{0.3};
    \def\se{1.5};
    \def\margin{0.2};
    
    \filldraw[pomset-1](\margin,0.7)--(2*\se - \margin,0.7)--(2*\se - \margin,0.7+\hw)--(\margin,0.7+\hw)--(\margin,0.7);
    \filldraw[pomset-2](0.0,0.2)--(\se - \margin,0.2)--(\se - \margin,0.2+\hw)--(0.0,0.2+\hw);%
    \filldraw[pomset-4](2*\se + \margin,0.7)--(3*\se - \margin,0.7)--(3*\se - \margin,0.7+\hw)--(2*\se + \margin,0.7+\hw)--(2*\se + \margin,0.7);
    \filldraw[pomset-3](3*\se,0.2+\hw)--(\se + \margin,0.2+\hw)--(\se + \margin,0.2)--(3*\se,0.2);

    \draw[thick,-](0,0)--(0,0.2);
    \draw[thick,-](0,0.2+\hw)--(0,1.2);
    \draw[thick,-](3*\se,0)--(3*\se,0.2);
    \draw[thick,-](3*\se,0.2+\hw)--(3*\se,1.2);

    \draw[dotted,thick,-](\se,0)--(\se,1.2);
    \draw[dotted,thick,-](2*\se,0)--(2*\se,1.2);

    \node at (\se*1/2,0.7+\hw*0.5) {$a_1$};
    \node at (\se*3/2,0.7+\hw*0.5) {$a_2$};
    \node at (\se*1/2,0.2+\hw*0.5) {$b_1$};
    \node at (\se*5/2,0.7+\hw*0.5) {$c_1$};
    \node at (\se*3/2,0.2+\hw*0.5) {$d_1$};
    \node at (\se*5/2,0.2+\hw*0.5) {$d_2$};

    \node at (\se*1/2, -.5) {
        $\pomset{\pinterface{\colora}\interface}{\interface {\colorb}\interface}
        \pomset{\interface {\colora} \interface}{\interface {\colorb}\pinterface}$
    };
    \node at (\se*3/2, -.5) {
        $\pomset{\interface {\colora} \interface}{\pinterface {\colord} \interface}
        \pomset{\interface {\colora} \pinterface}{\interface {\colord} \interface}$
    };
    \node at (\se*5/2, -.5) {
        $\pomset{\pinterface {\colorc} \interface}{\interface {\colord} \interface}
        \pomset{\interface {\colorc} \pinterface}{\interface {\colord} \interface}$
    };

    \node at (\se, -.5) {$*$};
    \node at (2*\se, -.5) {$*$};

\end{tikzpicture}
    \caption{A slicing of a pomset with 4 events ($a$, $b$, $c$, $d$) into 6 sub-events ($a_1$, $a_2$, $b_1$, $c_1$, $d_1$, $d_2$)}
    \label{fig-ltl-example}
\end{figure}

Our formal definition of sub-events does not rely on the ST-decomposition of the pomset.
Instead, we define it directly on the pomset itself.
A sub-event is a pair of two events, $(x,m)$, where $x \in P$ is the event being considered, and $m \in P$ acts as a timestamp indicating the current slice.
One can think of $m$ as \emph{the latest event to have started}.
For example, in \Cref{fig-ltl-example}, the sub-event $a_1$ is given by the pair $(a,a)$; while $a_2$ is given by the pair $(a,d)$.
However, notice that not all pairs of events define sub-events: $(c,d)$ makes no sense since event $c$ is not running when $d$ starts.
\Cref{def-order-start} introduces an order that captures these situations.

\begin{definition}
    \label{def-order-start}
	Let $P$ be a pomset.
    We say that $x \in P$ \defs{starts before} $y\in P$, denoted $x <^s y$ if there exists $z \in P$ such that $x \parallel z$ and $z < y$.
    We write $x \sim^s y$ when $x \not<^s y$ and $y \not<^s x$, and $x \lesssim^s y$ when $x <^s y$ or $x \sim^s y$.
\end{definition}

Intuitively, $x <^s y$ means that in \emph{every}
interval representation of $P$, the interval representing~$x$ starts before the one representing~$y$.
For example in the pomset of \Cref{fig-ltl-example}, we have $a <^s d$ and $d <^s c$. Observe that $\sim^s$ is an equivalence relation whose equivalence classes correspond intuitively to the slices of~$P$ (for example, $a \sim^s b$ in \Cref{fig-ltl-example}).
Hence, $\lesssim^s$ is a total preorder on the events of $P$, and we think of it as a total order on the slices.

\begin{definition}[Sub-event]
    \label{def-subevent}
    A \defs{sub-event} of a pomset $P$ is a pair $(x,m) \in P^2$ where $x \parallel m$ and $x \lesssim^s m$.
    When $x = y$ and $m \sim^s q$, we write $(x,m) \equiv (y,q)$.
\end{definition}

Note that, when several events start in the same slice, two sub-events may actually represent the same point in the interval representation.
This is why we introduce the $\equiv$ relation over sub-events.
When $(x,m) \equiv (y,q)$, we think of $(x,m)$ and $(y,q)$ as representing ``the same'' sub-event.
For example, the sub-event denoted by $a_1$ in \Cref{fig-ltl-example} can be formalized either by $(a,a)$ or $(a,b)$, since $a \sim^s b$.
To simplify the presentation, we do not explicitly quotient by the relation~$\equiv$.
However, we will make sure that whenever $(x,m) \equiv (y,q)$, the two sub-events satisfy the same formulas, i.e., $P, (x,m) \models \phi$ iff $P,(y,q) \models \phi$ (see \Cref{prop-ltlp-sube}).

Next, we define an order between sub-events.
This order will be crucial to ensure that our temporal logic operators can only go forward in time.
There are two ways to advance in time: either stay within the same event~$x$, but move to a later slice; or terminate the current event~$x$ and jump to a new event~$y$ that occurs after~$x$.

\begin{definition}[Order over sub-events]
    \label{def-se-order}
    Given two sub-events $(x,m)$ and $(y,q)$, we say that $(x,m)$ \defs{precedes} $(y,q)$ in $P$, denoted $(x,m) \prec (y,q)$, if either $x < y$, or $x = y$ and $m <^s q$.
\end{definition}

We denote by $\preccurlyeq$ the non-strict version: $(x,m) \preccurlyeq (y,q)$ if $(x,m) \prec (y,q)$ or $(x,m) \equiv (y,q)$.
Observe that $\preccurlyeq$ is a (partial) preorder on sub-events.
In \Cref{fig-ltl-example}, we have for example $a_1 \prec a_2 \prec c_1$ and $b_1 \prec c_1$.
However, $a_1 \not\prec d_1$ and $b_1 \not\prec a_2$: it is not allowed to jump to a different event that is concurrent with the current one.
Indeed, we want our temporal operators to be able to follow the local view of individual processes. (There will, however, be a different modality allowing to jump to a concurrent event within the same slice.)

Finally, we introduce a one-step version of $\prec$, which plays the role of our ``next'' modality.

\begin{definition}
    \label{def-se-order1}
    $(x,m) \prec_1 (y,q)$ if $(x,m) \prec (y,q)$ and there is no $r$ such that $m <^s r <^s q$.
\end{definition}

Intuitively, the relation $\prec_1$ only orders events in adjacent slices. For instance, in \Cref{fig-ltl-example}, we have $b_1 \prec_1 d_1$, but $b_1 \not \prec_1 c_1$.
Note that \Cref{def-se-order1} is more restrictive than the requirement that ``there is no $(z,r)$ such that $(x,m) \prec (z,r) \prec (y,q)$''.
The latter would allow jumping directly from $b_1$ to $c_1$ in \Cref{fig-ltl-example}, skipping a slice.
We prefer to ensure that our ``next'' modality always advances by exactly one slice.

We are now ready to introduce our logic \lsptl.
The ``next'' modality, denoted $\enext$, jumps to a successor sub-event in the next slice.
Note that it is existential, since there may be more than one successor.
The modalities $\ltlcirc$ and $\ltlrcirc$ jump to a concurrent sub-event within the same slice. Recall that in our pomsets, concurrent events are totally ordered by the event order $\eventorder$.
Finally, the atomic formula $\starter$ (resp.\ $\terminator$) checks whether the current sub-event is being started (resp.\ terminated) in the current slice.

\begin{definition}[\lsptl]
    \label{def-ltlp}
    The syntax of \lsptl is generated by the following grammar:
    \[\form, \psi ::= a \mid \starter \mid \terminator \mid \neg\form \mid \form \wedge \psi \mid \enext \form \mid \ltlcirc \form \mid \ltlrcirc \form \mid \form \until \psi\]
    where $a \in \alphabet$.
    Given a \lsptl formula $\phi$ and a sub-event $(x,m)$, we define the satisfaction relation $P,(x,m) \models \phi$ by induction on the formula $\phi$:
    \begin{enumerate}
        \item $P, (x,m) \models a$ if $\labelling_P(x) = a$
        \item $P, (x,m) \models \starter$ if $x \not\in \sint{P}$ and $x \sim^s m$ %
        \item $P, (x,m) \models \terminator$ if $x \not\in \tint{P}$ and there is no $q \in P$ s.t.\ $x \parallel q$ and $m <^s q$
        \item $P, (x,m) \models \neg\form$ if $P, (x,m) \not\models \form$
        \item $P, (x,m) \models \form \wedge \psi$ if $P, (x,m) \models \form$ and $P, (x,m) \models \psi$
        \item $P, (x,m) \models \enext \form$ if there is a sub-event $(y,q)$ such that $(x,m) \prec_1 (y,q)$ and ${P, (y,q) \models \form}$.
        \item $P, (x,m) \models \ltlcirc \form$ if there is $y \in P$ such that $(y,m)$ is a sub-event, $y$ is a direct successor of $x$ by $\eventorder$, and $P, (y,m) \models \form$
        \item $P, (x,m) \models \ltlrcirc \form$ if there is $y \in P$ such that $(y,m)$ is a sub-event, $y$ is a direct predecessor of $x$ by $\eventorder$, and $P, (y,m) \models \form$
        \item $P, (x,m) \models \form \until \psi$ if there exists $(y, n)$ such that $(x,m) \preccurlyeq (y, n)$, $P, (y,n) \models \psi$ and, for all $(z,q)$ such that
        $(x,m) \preccurlyeq (z,q) \prec (y,n)$, $P, (z,q)\models \form$.
    \end{enumerate}
\end{definition}

To define what ``$P$ satisfies $\form$''
means for a given pomset~$P$ and  \lsptl formula~$\form$, we need to choose a canonical ``source'' sub-event of the pomset. 
Let $M_P \subseteq P$ be the set of events that are minimal according to $<_P$.
Notice that $M_P$ is totally ordered by the event-order relation $\eventorder_{P}$.
Then we let $\src{P} = (x,x)$, where $x = \min_{\,\eventorder_{P}} M_P$.
Intuitively, this is the top-left sub-event in an interval representation. %
Finally, define $P \models \phi$ iff $P, \src{P} \models \phi$.

\begin{example}
\begin{figure}[h]
    \centering
	\vspace{-0.2cm}
    \begin{tikzpicture}[scale=1, every node/.style={transform shape}]
    \coordinate (A) at (0,0);
    \node (a) at (A) {$\ibullet a$};
    \node[right of = a, xshift = 0.2cm] (b) {$b$};
    \node[below of = b, yshift = 0.05cm] (c) {$c$};
    \node[right of = b, xshift = 0.2cm] (d) {$d$};
    \node[below of = d, yshift = 0.05cm] (e) {$e \ibullet$};
    \path[precedence] (a) edge (b) (b) edge (d) (b) edge (e) (c) edge (d) (c) edge (e);
    \path[event-order] (a) edge (c) (b) edge (c) (d) edge (e);
\end{tikzpicture}
\hspace{2cm}
\begin{tikzpicture}[scale=1.2]

    \def\possh{-1.3};
    \def\hw{0.3};
    \def\se{1.5};
    \def\margin{0.2};
    
    \filldraw[pomset-4](0,0.7)--(\se - \margin,0.7)--(\se - \margin,0.7+\hw)--(0,0.7+\hw);
    \filldraw[pomset-4](\se + \margin,0.7)--(2*\se - \margin,0.7)--(2*\se - \margin,0.7+\hw)--(\se + \margin,0.7+\hw)--(\se + \margin,0.7);
    \filldraw[pomset-2](\margin,0.2)--(2*\se - \margin,0.2)--(2*\se - \margin,0.2+\hw)--(\margin,0.2+\hw)--(\margin,0.2);
    \filldraw[pomset-1](2*\se + \margin,0.7)--(3*\se - \margin,0.7)--(3*\se - \margin,0.7+\hw)--(2*\se + \margin,0.7+\hw)--(2*\se + \margin,0.7);
    \filldraw[pomset-1](3*\se,0.2+\hw)--(2*\se + \margin,0.2+\hw)--(2*\se + \margin,0.2)--(3*\se,0.2);

    \draw[thick,-](0,0)--(0,0.7);
    \draw[thick,-](0,0.7+\hw)--(0,1.2);
    \draw[thick,-](3*\se,0)--(3*\se,0.2);
    \draw[thick,-](3*\se,0.2+\hw)--(3*\se,1.2);

    \draw[dotted,thick,-](\se,0)--(\se,1.2);
    \draw[dotted,thick,-](2*\se,0)--(2*\se,1.2);

    \node at (\se*1/2,0.7+\hw*0.5) {$a_1$};
    \node at (\se*3/2,0.7+\hw*0.5) {$b_1$};
    \node at (\se*1/2,0.2+\hw*0.5) {$c_1$};
    \node at (\se*5/2,0.7+\hw*0.5) {$d_1$};
    \node at (\se*3/2,0.2+\hw*0.5) {$c_2$};
    \node at (\se*5/2,0.2+\hw*0.5) {$e_1$};

\end{tikzpicture}
    \caption{A pomset containing 5 events $a$, $b$, $c$, $d$, $e$, where $c$ is split into two sub-events $c_1$ and~$c_2$. Note that formally, $c_1$ has two representatives $(c,a) \equiv (c,c)$, while $c_2$ is represented by $(c,b)$.}
    \label{fig-next}
\end{figure}

In the pomset represented in \Cref{fig-next}, $c_1 \models \starter \wedge \neg \terminator$ while $c_2 \models \neg \starter \wedge \terminator$.
Notice that $a_1 \not\models \starter$: since event~$a$ belongs to the starting interface of the pomset, it was already started in the first slice. Similarly, $e_1 \not\models \terminator$.
The $\enext$ modality allows to jump from $c_2$ to either $d_1$ or $e_1$, with an existential quantification.
It can be read as an ``exists next'' modality.
For instance, we have $c_2 \models \enext d$ and $c_2 \models \enext e$, but $c_2 \not\models \enext d \wedge e$.
However, note that $c_1 \not\models b$: the ``next'' modality does not allow to jump to a different event since event $c$ is still running (cf.\ \Cref{def-se-order1}).
The dual ``for all next'' modality can be defined as $\anext \form := \neg \enext \neg\form$.

The operators $\ltlcirc$ and $\ltlrcirc$ allow to move to a concurrent event following the event order relation, while staying within the current slice.
So, crucially, $b_1 \not\models \ltlcirc \ltlrcirc a$.
This was our motivation to introduce the notion of sub-events. Without them, one could inadvertently move forwards or backwards in time by only following event order relations.
Note that, within a slice, the event order relation is total. Thus, there can be at most one direct successor. The operator $\ltlcirc$ is still existential in a degenerate sense: if there is no successor, the formula is not satisfied. For instance, $e_1 \not\models \ltlcirc \top$.

The operator $\until$ is the usual Until modality with regard to $\preccurlyeq$.
However, it might be slightly counter-intuitive. Despite the universal quantification over all intermediary sub-events, it may seemingly ``miss'' events that are concurrent with one of the two endpoints.
For instance, in the pomset of \Cref{fig-next}, $a_1 \models (a \vee b) \until d$: since event~$c$ is parallel with~$a$, the sub-event $c_2$ is not reachable from $a_1$.
In the next section, we will define a variant of the Until operator that also takes into account these parallel events.
\end{example}

As stated earlier, our logic is consistent with the $\equiv$ relation. This is proven by induction
over the formula; the full proof can be found in \Cref{app:proof-prop-ltlp-sube}.

\begin{proposition}
    \label{prop-ltlp-sube}
    Let $P$ be a pomset, and let $(x,m) \equiv (y,q)$ two equivalent sub-events of~$P$, then
for every formula $\form$ of \lsptl, $P,(x,m) \models \form$ if and only if $P,(y,q) \models \form$.

\end{proposition}

\subsection{Derived operators}

The precise choice of operators for the logic \lsptl may seem somewhat arbitrary. It will be justified in \Cref{sec-ltl-expressivity}, where we show that it is equivalent to \fo.
So, any first-order definable temporal operators can also be defined in \lsptl.
We give a few useful examples below.
Recall that in the whole paper, we are working with pomsets of bounded dimension~$k$. So there can be at most $k$ parallel events at any time.

\begin{itemize}
    \item \textbf{Exists parallel:} $\ltlpar \form := \bigvee_{i = 0}^{k} (\ltlcirc^i \form \vee \ltlrcirc^i \form)$.
    This operator jumps to some sub-event in the current slice.
    $P, (x,m) \models \ltlpar \form$ iff there exists $y \in P$ such that $(y,m)$ is a subevent and $P,(y,m) \models \form$. The dual universally quantified operator is $\ltlapar \form := \neg\ltlpar\neg\form$.
    \item \textbf{Exists strict parallel:} $\ltlpars \form := \bigvee_{i = 1}^{k} (\ltlcirc^i \form \vee \ltlrcirc^i \form)$.
    Similar to the previous one, but this operator must jump to a different sub-event.
    In particular, if there is no other event currently running, this formula is always false.
    \item \textbf{Finally and Globally:} $\finally \form := \top \until \form$. This is the usual Finally operator of temporal logics: $P, (x,m) \models \finally \form$ iff there exists a sub-event $(y,q)$ of $P$ such that $(x,m) \preccurlyeq (y,q)$ and $P, (y,q) \models \form$.
    Its dual operator is Globally, defined as $\globally \form := \neg\finally\neg\form$.
    \item \textbf{Finally parallel and Globally parallel:} $\finallypar \form := \finally \ltlpar \form$.
    This variant of $\finally$ covers all sub-events that happen later or at the same time as the current sub-event, even those that are on parallel events.
    $P, (x,m) \models \finallypar \form$ iff there exists a sub-event $(y,q)$ of~$P$ such that $m \lesssim^s q$ and $P, (y,q) \models \form$.
    The dual universally quantified operator is $\globallypar \form := \neg \finallypar \neg \phi$. Equivalently, one can also define it as  $\globallypar \form := \globally \ltlapar \form$.
    \item \textbf{Until parallel:} $\form \untilp \psi := (\ltlapar \form) \until (\ltlpar \psi)$. This variant of the Until modality takes into account events that are concurrent with the two endpoints. 
	More formally,
    $P, (x,m) \models \form \untilp \psi$ iff there exists a sub-event of the form $(y,q)$ with $m \lesssim^s q$ such that $P,(y,q)\models \psi$ and
    for every sub-event $(z,r)$ such that $m \lesssim^s r <^s q$, $P,(z,r)\models \form$.
\end{itemize}

\subsection{Toy example}

To illustrate how our logic \lsptl allows to concisely express properties of concurrent systems, let us consider a very simple example: specifying the correctness of a mutual exclusion algorithm using locks.
Thus, suppose that we have a lock mechanism available, with two operations: action $P$ to \emph{acquire} the lock, and action $V$ to \emph{release} it~\cite{Dijkstra68}.
Now assume that we want our processes to use some critical resource~$a$ (perhaps a shared data structure) that cannot be accessed concurrently.
So, we want to ensure that there can never be two processes executing action~$a$ concurrently.
The obvious solution to this problem is the following: every process runs the program $(P;a;V)^\ast$. That is, first acquire the lock, then perform action~$a$, and then release the lock (and repeat).

We would like to specify that this implementation of a critical section is correct.
For that, we need to specify (\ref{enum:i}) the behavior of the lock, and (\ref{enum:ii}) the expected behavior of the mutual exclusion algorithm.
We express both of those properties in the language of \lsptl. %

\begin{enumerate}[(i)]
\item \label{enum:i}
\textbf{Lock specification.}
$\phi_{\text{lock}} = \globallypar ((P \wedge \terminator) \Rightarrow (\neg\ltlpars(P \wedge \terminator)  \wedge \enext(\neg(P \wedge \terminator) \untilp (V \wedge \starter))))$.

This formula expresses that at any point during the execution of the program, whenever an action $P$ terminates (i.e., a process acquires the lock), then no other process can acquire the lock ($P \wedge \terminator$) until the lock is released ($V \wedge \starter$).
\item \label{enum:ii}
\textbf{Mutual exclusion specification.}
$\phi_{\text{exclusion}} = \globallypar (a \Rightarrow \neg\ltlpars a)$.

This formula expresses that there are never two overlapping events~$a$.
\end{enumerate}

What we want to verify is that, assuming the $P/V$ actions behave according to the lock specification, our algorithm ensures the mutual exclusion property is satisfied.
The algorithm itself (say, for $k$ processes) can be modeled as an HDA, obtained as the parallel composition of $k$ copies of one process performing a loop $(P;a;V)^\ast$.
Executions of this HDA are pomsets, but not all of them satisfy the lock specification. We want to check that every execution that satisfies $\phi_{\text{lock}}$ also satisfies $\phi_{\text{exclusion}}$.
This amounts to model-checking that every pomset in the language of the HDA satisfies the formula $\phi_{\text{lock}} \Rightarrow \phi_{\text{exclusion}}$.

\section{Expressivity}
\label{sec-ltl-expressivity}
In this section, we show that \lsptl has the same expressive power as first order logic for pomsets of bounded dimension.
We prove this by providing well-chosen translations between pomset and word logics, thus allowing us to use directly the results of Kamp's theorem.
The translation from \lsptl to \fo can be done by writing the semantics of \lsptl in \fo formulas (see \Cref{lem-SPTL-to-FOP}).
Since we already gave the translation from \fo on pomsets to \fo on ST-sequences in \Cref{thm-FOP-to-FOST}, we now have to prove that
any \ltl formula over ST-sequences can be translated to an equivalent \lsptl formula.

\begin{figure}[h]
    \begin{tikzpicture}[scale=.8]

    \node[draw, rectangle, rounded corners=.2cm, minimum height=.6cm] (FOP) {\fo for pomsets};
    \node[draw, rectangle, rounded corners=.2cm, minimum height=.6cm] (FOW) [right=5cm of FOP] {\fo for ST-sequences};
    \node (Inter) [below=1cm of FOW] {};
    \node[draw, rectangle, rounded corners=.2cm, minimum height=.6cm] (LTLW) [below=0.8cm of FOW] {\ltl for ST-sequences};
    \node[draw, rectangle, rounded corners=.2cm, minimum height=.6cm] (SPTL) [below=0.8cm of FOP] {\ltl for pomsets (\lsptl)};

    \path[-latex, thick] (FOP) edge node [above] {\cref{thm-FOP-to-FOST}} (FOW);
    \path[latex-latex, thick] (FOW) edge node [right] {Kamp's theorem \cite{Rabinovich14}} (LTLW);
    \path[-latex, thick] (LTLW) edge %
    node[below] {\cref{lem-LTL-to-SPTL}}(SPTL);
    \path[-latex, thick] (SPTL) edge node [left] {\cref{lem-SPTL-to-FOP}} (FOP);
\end{tikzpicture}
	\vspace{-0.5cm}
    \caption{Summary of the proof of \Cref{thm-sptl=fo}}
    \label{fig-translations-proof}
\end{figure}

First, we formalize the intuition given at the start of \Cref{sec-ltl-def}: the ``slices'' of a pomset can be obtained by gluing two consecutive elements (a starter and a terminator) of its sparse ST-decomposition.
The following definition allows padding the sparse ST-decomposition with identity elements, in order to make sure that it always starts with a starter, and ends with a terminator.
We write $\mathrm{Id}_U$ for the identity pomset with starting and terminating interface~$U$.

\begin{definition}
    Given a pomset $P$ with sparse ST-decomposition $P_1 P_2 \cdots P_n$, the \defs{\evensparse} ST-decomposition of $P$ is $S P_1 P_2 \cdots P_n T$
    where $S = \mathrm{Id}_{S_{P_1}}$ if $P_1$ is a terminator, or $S = \varepsilon$ otherwise, and $T = \mathrm{Id}_{T_{P_n}}$ if $P_n$ is a starter, or $T = \varepsilon$ otherwise.
\end{definition}

\begin{proposition}
    \label{prop-induction}
    \label{prop-LTLST-to-LTLP}
    \label{lem-LTL-to-SPTL}
    For any $k$, for any \ltl formula $\form$ over $k$-dimensional ST-sequences, there exists an \lsptl formula $\translated{\form}$ such that for any pomset $P$
with \evensparse decomposition $P_1 P_2 \cdots P_{2n}$,
we have $P \models \translated{\form}$ if and only if $P_1 P_2 \cdots P_{2n} \models \form$.

\end{proposition}

\begin{proof}[Proof sketch]
	Notice that, since one ``slice'' of a pomset corresponds to two symbols in the ST-decomposition, one application of the Next modality $\enext$ of \lsptl corresponds to two applications of the Next modality $\ltlnext$ of LTL.
	Thus, we actually define two translations $\translated{\form}_\starter$ and $\translated{\form}_\terminator$ by mutual induction.
	The first translation is the one required to prove the Proposition, $\translated{\form} = \translated{\form}_\starter$.
	The second translation $\translated{\form}_\terminator$ is used after one application of $\ltlnext$, when the ST-decomposition starts with a terminator.
	
    For the base case where $\form = P$ for some starter or terminator~$P$, $\translated{\form}_\starter$ checks that~$P$ is indeed a starter,
    and that the elements in the current slice are exactly those of $P$.
	Similarly for $\translated{\form}_\terminator$, we check that $P$ is a terminator.
    To translate the $\ltlnext$ modality, if the current element is a starter, then the next one is the terminator of the same slice: $\translated{\ltlnext\psi}_\starter = \translated{\psi}_\terminator$.
    If, on the contrary, the current element is a terminator, we need to jump to the next slice: $\translated{\ltlnext\psi}_\terminator =
    \enext\translated{\psi}_\starter$.
    For the Until modality, we need to ensure that any intermediate slice satisfies~$\psi_1$ for both the starter and terminator parts,
    until the final slice which satisfies~$\psi_2$ either in its starter or terminator part:
    $\translated{\psi_1 \until \psi_2}_\starter = (\translated{\psi_1}_\starter \wedge \translated{\psi_1}_\terminator)
    \until
    (\translated{\psi_2}_\starter \vee (\translated{\psi_1}_\starter \wedge \translated{\psi_2}_\terminator))$
    and
    $\translated{\psi_1 \until \psi_2}_\terminator =
    \translated{\psi_2}_\terminator \vee 
	(\translated{\psi_1}_\terminator \wedge \enext(
    (\translated{\psi_1}_\starter \wedge \translated{\psi_1}_\terminator)
    \until
    (\translated{\psi_2}_\starter \vee (\translated{\psi_1}_\starter \wedge \translated{\psi_2}_\terminator))))$.

    The full proof is in \Cref{app:proof-prop-induction}. %
\end{proof}

\Cref{lem-LTLP-to-FOP} is proved by expressing the semantics of \Cref{def-ltlp} in \fo. Full details are in \Cref{app:proof-lem-ltlp-to-fop}.
Then we conclude the proof of expressivity of \lsptl in \Cref{thm-LTLP--FOP}.

\begin{proposition}
    \label{lem-SPTL-to-FOP}
    \label{lem-LTLP-to-FOP}
    For any \lsptl formula $\form$, there exists an \fo formula over pomsets $\translated{\form}$ such that, for any pomset $P$,
$P \models \form$ if and only if $P \models \translated{\form}$.

\end{proposition}

\begin{theorem}
    \label{thm-sptl=fo}
    \label{thm-LTLP--FOP}
    For any pomset language $L$ of bounded dimension $k$, the two following statements are equivalent:
    \begin{enumerate}
        \item There exists $\varphi$, an \fo formula, such that for any pomset $P$, $P \in L$ if and only if $P \models \varphi$.
        \item There exists $\psi$, an \lsptl formula, such that for any pomset $P$, $P \in L$ if and only if $P \models \psi$.
    \end{enumerate}
\end{theorem}
\begin{proof}
    Let us proceed as depicted in \Cref{fig-translations-proof}.
    By \Cref{thm-FOP-to-FOST}, any \fo formula $\form$ over pomsets can be translated into an equivalent \fo formula $\form'$ over ST-sequences, accepting exactly the ST-decompositions
    of pomsets accepted by $\form$. This formula can in turn be translated into an equivalent \ltl formula $\form''$ by Kamp's theorem \cite{Rabinovich14}.
    Finally, by \Cref{prop-LTLST-to-LTLP}, there is an \lsptl formula $\form'''$, such that for any pomset $P$,
    $P \models \form'''$ if and only if $P$'s \evensparse ST-decomposition validates $\form''$, which in turn is equivalent to the fact that
    $P \models \form$.
    This proves that \fo is at most as expressive as \lsptl. \Cref{lem-LTLP-to-FOP} ensure that they are in fact equivalent by providing the converse
    translation.
\end{proof}

\begin{remark}
    All our proofs are constructive, in the sense that algorithms can easily be extracted from our translations and from Kamp's theorem \cite{Rabinovich14}.
    Given an \fo formula $\form$, the size of the associated \lsptl formula is non-elementary in the size of $\form$. This is because there is a
    non-elementary succinctness gap between \fo and \ltl over words \cite{Rabinovich14}.
    For the converse translation, given an \lsptl formula $\psi$, the size of the translated formula \fo is linear in the size of $\psi$.
    The complexity for each translation depicted in \Cref{fig-translations-proof} can be found alongside the associated proofs in \Cref{app:proof-thm-FOP-to-FOST,app:proof-prop-induction,app:proof-lem-ltlp-to-fop}.
\end{remark}

\section{Conclusion and future work}
\label{sec-conclu}

This paper is a first step towards understanding temporal logics on HDA pomset languages.
We have established two key results.
The first one, \Cref{thm-FOP-to-FOST}, shows that \fo formulas on pomsets can be translated to \fo on ST-sequences.
Our second result is \Cref{thm-sptl=fo}, asserting that the temporal logic \lsptl is equivalent to \fo over pomsets, thus extending Kamp's theorem to pomset languages.

As future work, 
it would be insightful to find other characterizations of the class of FO-definable pomset languages,
such as defining a notion of counter-free higher dimensional automata.
More importantly, a central purpose of temporal logics is the specification and verification of program properties.
Thus, the obvious next steps of this work is to design efficient algorithms for deciding satisfiability and model-checking of \lsptl formulas.
It is already known (see \cite{AmraneBFF24}, Corollary 10) that these problems are decidable for \mso formulas, and thus also for \fo formulas.
We hope however that verifying \lsptl formulas directly should yield a better complexity than relying on their \fo translation.

\bibliography{biblio}

\clearpage
\appendix

\section{Discussion: Why we need sub-events}
\label{app:sub-events}

\subsection{Event-based Pomset Temporal Logic}

In this section, we present a temporal logic that is weaker than \fo, which we call \defs{Event-based Pomset Temporal Logic} (\eptl).
This is a straightforward attempt at defining a local temporal logic for pomsets, with two ``Exists Next'' operators for the precedence order $<_P$ and for the event order $\eventorder_{P}$.
It is similar to the local temporal logic over Mazurkiewicz
traces defined in~\cite{DiekertG06local}.
This example showcases why it is not easy to design an \ltl-like logic over pomsets that is equivalent to \fo, and why we had to introduce the notion of sub-events.

\begin{definition}[Events-based Pomset Temporal Logic (\eptl)]\label{def-eptl}
    The syntax of \eptl formulas is defined by the following grammar, where $a \in \alphabet$:
    \[\form,\psi ::= a \mid \starter \mid \terminator \mid  \neg\form \mid \form \wedge \psi \mid \langle < \rangle\,\form \mid \ltlcirc \form \mid \form \until \psi\]
    An \eptl formula is evaluated over a single event $x$ of a pomset $P$:
    \begin{itemize}
        \item $P, x \models a$ if $\labelling_P(x) = a$,
        \qquad\qquad $P, x \models \starter$ if $x \in \sint{P}$,
        \qquad\qquad $P, x \models \terminator$ if $x \in \tint{P}$,
        \item $P, x \models \neg \form$ if $P, x \not\models \form$,
        \qquad\qquad \!$P, x \models \form \wedge \psi$ if $P,x\models \form$ and $P,x\models\psi$,
        \item $P, x \models \langle < \rangle \, \form$ if $\exists y \in P$,
         $y$ is a direct successor of $x$ for $<_P$
        and $P, y \models \form$,
        \item $P, x \models \ltlcirc \form$ if $\exists y \in P$, $y$ is a direct successor of $x$
        for ${\eventorder}_P$
        and $P, y \models \form$,
        \item $P, x \models \form \until \psi$ if $\exists y >_P x$
        $P, y \models \psi$ and $\forall z \in P$ such that
        $x \leq_P z <_P y$, $P, z \models \form$.
    \end{itemize}
\end{definition}

In order to define $P \models \form$, where $\form$ is an \eptl formula,
we need to fix a canonical $e \in P$ from which $\form$ will be interpreted.
This motivates \Cref{def-eptl-source}.

\begin{definition}
    \label{def-eptl-source}
    Given a non-empty pomset $P$, the \defs{source} of $P$, denoted $\src{P}$,
    is the minimal event according to $\eventorder$ of $\{e \in P \mid \forall f \in P, f \not< e\}$.
\end{definition}

We write $P \models \form$ if $P, \src{P} \models \form$.
    Note that the unary predicate verifying whether $x$ is a source is \fo-definable, define
    $\src{x} := \forall y. (\forall z. y \not< z) \Rightarrow x \eventorder y$.

\begin{proposition}
    \label{thm-exp-EPTL}
    \eptl is stricly less expressive than \fo.

\end{proposition}

\begin{proof}[Proof (Sketch)]
First, \eptl is at most as expressive as \fo since its semantics expressed in \Cref{def-eptl} can be translated into first order formulas.
To show that the inclusion is strict, consider the following two families of pomsets, depicted in \Cref{fig-pn}:
\[
P_n = \pomset{a \interface}{a \interface}
	\left(
    \pomset{\interface a \pinterface}{\interface a \interface}
    \pomset{\interface a \interface}{\pinterface a \interface}
	\right)^n
	\pomset{\interface a}{\interface a}
\qquad\text{and}\qquad 
	P'_n =
	\pomset{a \interface}{a \interface}
	\pomset{\interface a \pinterface}{\interface a \interface}
	\pomset{\pinterface a \interface}{\interface a \interface}
	\left(
    \pomset{\interface a \pinterface}{\interface a \interface}
    \pomset{\interface a \interface}{\pinterface a \interface}
	\right)^n
	\pomset{\interface a}{\interface a}
\]
\begin{figure}[h]
    \centering
    \begin{subfigure}{.4\textwidth}
    \scalebox{0.8}{
        \begin{tikzpicture}%
            \def\hw{0.3}
            \def\th{0.5}
            \def\fh{0.1}
            \coordinate (O) at (0,0);
            \coordinate[above = 2cm of O] (Up);
            \coordinate[right = 5cm of O] (Right);
            \coordinate[above = 2cm of Right] (Up-right);
            \draw[-] (O) -- (Up);
            \draw[-] (Right) -- (Up-right);

            \filldraw[pomset-3] (0.2,\fh+3*\th) -- (1.8,\fh+3*\th) -- (1.8,\fh+3*\th+\hw) -- (0.2, \fh+3*\th+\hw) --  cycle; 
            \node at (1,\fh+3*\th+\hw*0.5) {$a$};

            \filldraw[pomset-1] (1.2,\fh+2*\th) -- (2.8,\fh+2*\th) -- (2.8,\fh+2*\th+\hw) -- (1.2, \fh+2*\th+\hw) --  cycle; 
            \node at (2,\fh+2*\th+\hw*0.5) {$a$};

            \filldraw[pomset-1] (2.2,\fh+1*\th) -- (3.8,\fh+1*\th) -- (3.8,\fh+1*\th+\hw) -- (2.2, \fh+1*\th+\hw) --  cycle; 
            \node at (3,\fh+1*\th+\hw*0.5) {$a$};

            \filldraw[pomset-1] (3.2,\fh) -- (4.8,\fh) -- (4.8,\fh+\hw) -- (3.2, \fh+\hw) --  cycle; 
            \node at (4,\fh+\hw*0.5) {$a$};
        \end{tikzpicture}
    }
\end{subfigure}
\qquad
\begin{subfigure}{.4\textwidth}
    \scalebox{0.8}{
        \begin{tikzpicture}%
            \def\hw{0.3}
            \def\th{0.5}
            \def\fh{0.1}
            \coordinate (O) at (0,0);
            \coordinate[above = 2cm of O] (Up);
            \coordinate[right = 6cm of O] (Right);
            \coordinate[above = 2cm of Right] (Up-right);
            \draw[-] (O) -- (Up);
            \draw[-] (Right) -- (Up-right);

            \filldraw[pomset-3] (1.2,\fh+3*\th) -- (2.8,\fh+3*\th) -- (2.8,\fh+3*\th+\hw) -- (1.2, \fh+3*\th+\hw) --  cycle; 
            \node at (2,\fh+3*\th+\hw*0.5) {$a$};

            \filldraw[pomset-2] (0.2,\fh+3*\th) -- (0.8,\fh+3*\th) -- (0.8,\fh+3*\th+\hw) -- (0.2, \fh+3*\th+\hw) --  cycle; 
            \node at (.5,\fh+3*\th+\hw*0.5) {$a$};

            \filldraw[pomset-1] (0.2,\fh+2*\th) -- (3.8,\fh+2*\th) -- (3.8,\fh+2*\th+\hw) -- (0.2, \fh+2*\th+\hw) --  cycle; 
            \node at (2,\fh+2*\th+\hw*0.5) {$a$};

            \filldraw[pomset-1] (3.2,\fh+1*\th) -- (4.8,\fh+1*\th) -- (4.8,\fh+1*\th+\hw) -- (3.2, \fh+1*\th+\hw) --  cycle; 
            \node at (4,\fh+1*\th+\hw*0.5) {$a$};

            \filldraw[pomset-1] (4.2,\fh) -- (5.8,\fh) -- (5.8,\fh+\hw) -- (4.2, \fh+\hw) --  cycle; 
            \node at (5,\fh+\hw*0.5) {$a$};
        \end{tikzpicture}
    }
\end{subfigure}
    \caption{Pomsets $P_2$ and $P'_2$}
    \label{fig-pn}
\end{figure}

\noindent
Consider the \fo formula
$\form = \exists x. \exists y. \exists z.\,
\src{x} \wedge x \rightarrow y \wedge x \eventorder^d z \eventorder^d y$,
where $x \rightarrow y$ and $x \eventorder^d y$ denote the direct successor w.r.t.\ $<$ and $\eventorder$, respectively.
The formula $\form$ separates the two languages, i.e., $P_n \models \form$
but $P'_n \not\models \form$ for all $n\in\mathbb{N}$.
Now we must show that any \eptl formula of size
$t$ cannot distinguish the pomsets $P_t$ and $P'_t$.
First, atomic formulas ($a \in \Sigma$, $\starter$ and $\terminator$) cannot distinguish between elements of $P_t$ and $P'_t$, since all events are labeled by $a$, and the interfaces are empty.
Moreover, notice that starting from $\src{P_t}$ (in orange in \Cref{fig-pn}) or $\src{P'_t}$ (in purple), there is a chain of length~$\geq t$ for both relations $\rightarrow$ and $\eventorder$.
Thus, using modalities $\langle < \rangle$ and $\ltlcirc$ to distinguish $\src{P_t}$ from $\src{P'_t}$
would require to reach the last elements of the pomset, which cannot be done
in less than $t$ steps.
Finally, the until modality cannot distinguish them either: this is a bit more technical but fairly standard.
So we conclude that the language $\langof{\form}$ is not definable in \eptl.
\end{proof}

\begin{remark}
To separate the two languages in \lsptl, one can remark that in the second slice of the pomsets $P_n'$, there is an event $a$ (in orange) which is both started and terminated.
Thus, the formula $\phi = \enext (\starter \wedge \terminator)$ is satisfied in all pomsets of the form $P_n'$, but in none of the pomsets $P_n$.
\end{remark}

\section{Omitted proofs}
\label{app:proofs}

\subsection{Proof of \Cref{lem-sim-FO}}
\label{app:proof-lem-sim-FO}
{
	\renewcommand{\thelemma}{\ref{lem-sim-FO}}
	\begin{lemma}
Fix $i,j \in \{1, \dots, k\}$. Then, the binary relation $(x, i) \relation (y,j)$
is definable by an $\fok{k}$ formula with two free variables $x$ and $y$.
	\end{lemma}
	\addtocounter{lemma}{-1}
}

\begin{proof}
    We define a finite state automaton that scans the ST-sequence between the two positions~$x$ and~$y$.
During all the execution, the automaton keeps track the position of the event represented by the $i$-th position of~$x$.
Showing that this automaton is counter-free yields an $\fok{k}$ formula for the $\relation$ relation, thus proving the lemma.
Formally, given $a \in \alphabet$ and $i,j \in \{1, \dots, k\}$,
we define a finite state automaton $\automata_{i,j,a} = (\stset{k}, Q, q_0, F, \delta)$
over the alphabet $\stset{k}$, parameterized by $i, j, a$, as follows:
\begin{itemize}
    \item The set of states $Q$ is given by:
    \[
    Q = \{\bot, \top, \sqcup\} \cup (\concb{k} \times \{1, \dots, k\})
    \]
    The states $\bot, \top, \sqcup$ represent the initial state, final state, and sink state, respectively.
    All other states are of the form $(C, \ell)$, where $C$ is the list of currently active events,
    and the event $a$ that we are following is the $\ell$-th element in this conclist.
    \item The initial state is $q_0 = \bot$.
    \item The set of accepting states is $F = \{\top\} \cup \{(q,j) \mid q \in Q\}$ (recall that $j$ is a parameter fixed in advance).
    \item The transition function $\delta$ is defined as follows, for any $P \in \stset{k}$:%
        \begin{align*}
            \delta(\bot, \stelement) & = (\tint{\stelement}, i) && \text{if $\tint{\stelement}[i]$ is a non-terminating $a$}\\
            \delta(\bot, \stelement) & = \top && \text{if $\tint{P}[i]$ is a terminating $a$ and $i = j$}\\
            \delta((\sint{\stelement}, \ell), \stelement) & = (\tint{\stelement}, \ell')&& \text{for any $\ell,\ell'$ when $\sint{\stelement}[\ell] = \tint{\stelement}[\ell']$ and is an $a$}\\
            \delta((\sint{\stelement}, j), \stelement) & = \top && \text{if $\sint{\stelement}[j]$ is a terminating $a$}\\
            \delta(q, \stelement) &= \sqcup \text{ (the sink state) } && \text{otherwise}
        \end{align*}
        where $C[i]$ denotes is the $i$-th element of the conclist $C$.
\end{itemize}
\begin{figure}[htbp]
    \centering
    \begin{tikzpicture}
    [auto, node distance=2cm, initial text=, every state/.style={minimum size = 1.1cm}]
        \node[state, initial]   (bot)                       {$\bot$};
        \node                   (1)   [right = of bot]       {};
        \node[state]            (a1)  [right = 1.7cm of 1]      {$a,1$};
        \node[state]            (aa1) [below = 1cm of 1]       {$aa,1$};
        \node[state]            (ab1) [above = 1cm of 1]       {$ab,1$};
        \node                   (2)   [right = of a1]       {};
        \node[state, accepting] (aa2) [below  = 1cm of 2]    {$aa,2$};
        \node[state, accepting] (ba2) [above = 1cm of 2]    {$ba,2$};
        \node[state, accepting] (top) [right = 1.7cm of 2]  {$\top$};

        \path[-latex, inner sep=1pt]
            (bot)   edge                    node [swap]         {$\pomset{(\interface)\follow{a}\interface}{(\interface)a\interface}$\hphantom{-}} (aa1)
                    edge                    node [pos=0.3]         {$[\follow{a} \interface]$} (a1)
                    edge                    node []        {\hphantom{-}$\pomset{(\interface)\follow{a}\interface}{(\interface)b\interface}$} (ab1)
            (aa1)   edge [bend left=.4cm]   node []        {$\pomset{\interface \follow{a} \interface}{\interface a \pinterface}$} (a1)
            (ab1)   edge [bend left=.4cm]   node []        {$\pomset{\interface \follow{a} \interface}{\interface b \pinterface}$} (a1)
            (a1)    edge [bend left=.4cm]   node []        {$\pomset{\interface \follow{a} \interface}{\pinterface a \interface}$} (aa1)
                    edge [bend left=.4cm]   node []        {$\pomset{\interface \follow{a} \interface}{\pinterface b \interface}$} (ab1)
                    edge [bend right=.4cm]  node [swap]         {$\pomset{\pinterface a\interface}{\interface \follow{a} \interface}$} (aa2)
                    edge [bend right=.4cm]  node [swap]   {$\pomset{\pinterface b\interface}{\interface \follow{a} \interface}$\hphantom{\quad}} (ba2)
            (aa2)   edge [bend right=.4cm]  node [swap]        {$\pomset{\interface a \pinterface}{\interface \follow{a} \interface}$} (a1)
                    edge                    node [swap]         {$\pomset{\interface a (\interface)}{\interface \follow{a} \phantom{(\interface)}}$} (top)
            (ba2)   edge [bend right=.4cm]  node [swap]        {$\pomset{\interface b \pinterface}{\interface \follow{a} \interface}$} (a1)
                    edge                    node []        {$\pomset{\interface b (\interface)}{\interface \follow{a} \phantom{(\interface)}}$} (top);
\end{tikzpicture}
    \caption{$\automata_{1,2,a}$, sink state and identities not drawn.}
    \label{fig-automata-sim-proof}
\end{figure}

\Cref{fig-automata-sim-proof} shows the automaton $\automata_{1,2,a}$, for $\alphabet = \{a,b\}$ and $k = 2$.
For the sake of readability, when an interface is represented enclosed in parenthesis (e.g. $[(\ibullet) a]$), we mean
that both transitions exist, with and without the interface (e.g. $[a]$ and $[\ibullet a]$).
The event $a$ that we are tracking is colored in red.

Note that $\automata_{i,j,a}$ is deterministic and complete. We now show that it is counter-free.
So we need to find an $n \in \mathbb{N}$ such that, for any non-empty word ${w \in (\stset{k})^\ast}$ and any state~$q$, we have $\delta(q, w^n) = \delta(q, w^{n+1})$.
Choose $n = k + 2$. Consider an arbitrary state $q$ and a non-empty word $w$.
We have three cases:
\begin{enumerate}
    \item\label{item-lem-sim-FO-sink} If $q = \top$ or $q = \sqcup$, the only accessible state on a non-empty word is the sink state $\sqcup$, verifying the property.
    \item If $q = (c, \ell) \in \concb{k} \times \{1, \dots, k\}$, let us consider $\delta((c,\ell), w)$. 
    \begin{enumerate}
        \item If $\delta((c,\ell),w) = (c,\ell)$, we indeed have $\delta((c,\ell),w^{n+1}) = \delta((c,\ell),w^{n}) = (c,\ell)$. 
        \item If $\delta((c,\ell),w) = (c', \ell')$, with $c' \neq c$, then $\delta((c', \ell'), w) = \sqcup$ since either
        the starting interface or the terminating interface of $w$ will be incompatible with $c'$. Thus, the property is verified. %
        \item If $\delta((c,\ell),w) = (c, \ell')$, with $\ell \neq \ell'$, assume that $\ell > \ell'$ (the opposite case is similar).
        We claim that, as we keep iterating the word $w$, the index $\ell$ will keep decreasing strictly, until we reach either the state $\top$ or $\sqcup$ after at most $k$ iterations. So at iteration $k+1$, we end up in $\sqcup$, where we stay, making the property true.
        To prove our claim, let us write $(c, \ell_i) = \delta((c,\ell),w^i)$ the sequence of visited states. We need to show that the sequence $(\ell_i)$ is strictly decreasing.

        The example below illustrates what might happen, with a pomset of dimension $k=3$. We start in state $(aaa,3)$, so three $a$'s are running concurrently, and we are tracking the third one (in red).
		When the automaton reads the word $w$ below, the first $a$ is terminated, and another $a$ is started, but this new $a$ happens to be placed after the other two according to event order.
		The tracked $a$ ends up in position 2, so the state of the automaton is now $(aaa,2)$.
        \[
          w = \left[\begin{array}{ccc}
          \ibullet a \pinterface \\ \color{blue} \ibullet a \ibullet \\ \color{red} \ibullet a \ibullet
          \end{array}\right]
    	  \left[\begin{array}{ccc}
    		\color{blue} \ibullet a \ibullet \\ \color{red} \ibullet a \ibullet \\ \pinterface a \ibullet
          \end{array}\right]
      	\qquad\qquad
		  \begin{array}{l}
      	  \delta((aaa, 3), w) = (aaa, 2)\\[6pt]
		  \delta((aaa, 3), w^2) = (aaa, 1)
		  \end{array}
         \]
        Let us assume that for some~$i$, $\ell_i > \ell_{i+1}$, we will show that at the next step, $\ell_{i+1} > \ell_{i+2}$.
        Let $P_w \in \iipoms$ be the pomset generated by gluing~$w$ (the automaton ensures that $w$ is coherent, otherwise we end up in the sink state $\sqcup$).
        Let $p_i, p_{i+1} \in P_w$ denote the events of $P_w$ at position $\ell_i$ and $\ell_{i+1}$ respectively, in the starting interface $\sint{P_w}$. (In the example above, they are the red and blue events, respectively.)
		Since $\ell_i > \ell_{i+1}$, we must have $p_{i+1} \eventorder p_i$ according to event order.
		In the terminating interface $\tint{P_w}$, both events are still active. Event $p_i$ ends up at position $\ell_{i+1}$ and $p_{i+1}$ ends up at position $\ell_{i+2}$.
		Since these are still the  same events, the event order did not change, so we still have $p_{i+1} \eventorder p_i$, i.e., $\ell_{i+1} > \ell_{i+2}$ as required.
    \end{enumerate}
    \item If $q = \bot$, we have $\delta(\bot, w) \neq \bot$ since $w$ is non-empty and there is no incoming transition in $\bot$.
    So we land in one in the previous cases, with one extra step.
    Hence, $\delta(\bot, w^{k+3}) = \delta(\bot, w^{k+2})$.
\end{enumerate}
This proves that $\automata_{i,j,a}$ is counter-free. By \Cref{prop-aperiodic-autom}, let
$\theta_{i,j,a}$ be a closed first-order formula equivalent that recognizes the same language as $\automata_{i,j,a}$.
By $\theta_{i,j,a}^{x \rightarrow y}$, we denote the restriction of $\theta_{i,j,a}$ to the portion
of the ST-sequence located between $x$ and $y$, {i.e.}\ where subformulas of the form $\exists z.\form$ are inductively
replaced by $\exists z. x \leq z \wedge z \leq y \wedge \form$.
Then the following $\fok{k}$ formula defines ${(x,i) \relation (y,j)}$:
\[(x,i) \relation (y,j) := \bigvee_{a \in \alphabet} (x < y \wedge \theta_{i,j,a}^{x \rightarrow y})
                                            \vee (y < x \wedge \theta_{j,i,a}^{y \rightarrow x})
\qedhere\]

\end{proof}

\subsection{Proof of \Cref{thm-FOP-to-FOST}}
\label{app:proof-thm-FOP-to-FOST}

{
	\renewcommand{\thetheorem}{\ref{thm-FOP-to-FOST}}
	\begin{theorem}
    
	\end{theorem}
	\addtocounter{theorem}{-1}
}

\begin{proof}
    The translation $\translated{\form}$ of $\form$ must exactly accept the words $w = P_1 P_2 \cdots P_n \in \stset{k}^*$ that verifies two properties.
First, $w$ is a coherent ST-sequence. Second, if $w$ is a coherent ST-sequence, then the pomset $\gluew{w} = P_1 * P_2 * \cdots * P_n$ validates~$\form$.
The first property is local and can be checked by the following first-order formula, where $x \rightarrow y := x < y \wedge \neg \exists z.\; x < z < y$.
\[ \coh{k} =
        \forall x. \forall y.\; x \rightarrow y \Rightarrow
        \bigvee_{\underset{P_1 \cdot P_2 \text{ is coherent}}{P_1, P_2 \in \stset{k}}}
        P_1(x) \wedge P_2(y)
\]
For the second property, we define by induction a translation $\translated{\psi}_\tau$, where $\psi$ is a subformula of $\form$.
This translation is indexed by a function $\tau: \calV(\psi) \rightarrow \{1, \dots, k\}$, where $\calV(\psi)$ denotes the set of the free variables of $\psi$.
As explained in \Cref{sec-sim-relation}, the pair $(x, \tau(x))$ is used to keep track of events, where $\tau(x)$ is a position within the starter/terminator designated by~$x$.
The invariant is as follows: if $w$ is a coherent ST-sequence, then $w, \interp \models \translated{\psi}_\tau$ if and only if
$\gluew{w}, \interp' \models \psi$, where $\interp'(x)$ is the $\tau(x)$-th event of $\interp(x)$. We now proceed to the induction.
\begin{itemize}
    \item If $\psi = \neg \theta$, then $\translated{\psi}_\tau = \neg \translated{\theta}_{\tau}$,
    \item If $\psi = \theta_1 \wedge \theta_2$, then $\translated{\psi}_\tau = \translated{\theta_1}_{\tau} \wedge \translated{\theta_2}_{\tau}$,
    \item If $\psi = \exists x. \theta$, then $\translated{\psi}_\tau = \bigvee_{i\in \{1, \dots, k\}} \exists x.\;\translated{\theta}_{\tau[x \mapsto i]}$.
    Intuitively, when we existentially quantify a first-order variable over pomsets (where it represents an event),
    we need to guess a starter/terminator $x$, and a position~$i$. Since there are only
    $k$ possible positions, we use a finite disjunction,
    \item If $\psi = a(x)$, then $\translated{\psi}_{[x \mapsto i]}$ is the disjunction of all $P(x)$, where $P \in \stset{k}$ is
    such that the $i$-th element of $P$ is an $a$, %
    \item If $\psi = \starter(x)$, then
    $\translated{\psi}_{[x \mapsto i]} = \bigvee_{j \in \{1, \dots, k\}} \forall y.\; (x,i) \relation (y,j) \wedge \starter(y,j)$,
    where $\starter(y,j)$ is the disjunction of all $P(x)$, where the $j$-th element of $P \in \stset{k}$ is in the
    starting interface,
    \item If $\psi = \terminator(x)$, the construction is similar,
    \item If $\psi = x < y$, then
    \[\translated{\psi}_{[x \mapsto i, y \mapsto j]} = \bigwedge_{i',j' \in \{1, \dots, k\}} \forall x'. \forall y'.\;
    ((x,i) \relation (x',i') \wedge (y,j) \relation (y',j')) \Rightarrow x' < y'\]
    \item Finally, if $\psi = x \eventorder y$, then
    \[\translated{\psi}_{[x \mapsto i, y \mapsto j]} = \bigvee_{1 \leq i' < j' \leq k} \exists z.\; (x,i) \relation (z,i') \wedge
    (y,j) \relation (z, j')\]%
\end{itemize}

To conclude, we write the translated of $\form$ as $\translated{\form} = \coh{k} \wedge \bigvee_{\tau: \calV(\form) \rightarrow \{1, \dots, k\}}\translated{\form}_\tau$.

\end{proof}

\begin{lemma}
	Withed fixed dimension $k$, the worst-case size of the translated formula is $|\translated\form| = \Theta(k^{|\form|})$.
	More precisely, it is exponential in the number of quantifiers.
\end{lemma}

\begin{proof}
	We have that $|\translated{\exists x_1. \exists x_2. \dots \exists x_n. a(x_1)}|
	= |\coh{k}| + k^n \times O(1)$, where the original size is
	$|\exists x_1. \exists x_2. \dots \exists x_n. a(x_1)| = n + 1$.
	Therefore, we have that the worst-case size of the translated formula is
	$|\translated{\form} | = \Omega(k^{|\form|})$.
	The fact that $|\translated\form| = O(2^{|\form|})$ is proven by induction over
	$\form$. Remark that $|\coh{k}| = O(1)$.
	\begin{itemize}
		\item $|\translated{\neg\form}| = 1 + O(k^{|\form|}) = O(k^{|\neg\form| - 1}) = O(k^{|\neg\form|})$,
		\item $|\translated{\form_1 \wedge \form_2}| = 1 + O(k^{|\form_1|})
		+ O(k^{|\form_2|}) = O(k^{|\form_1 \wedge \form_2|-1}) = O(k^{|\form_1 \wedge \form_2|})$,
		\item $|\translated{\exists x.\form}| = k \times O(k^{|\form|}) + k = O(k^{|\form| + 1})
		= O(k^{|\exists x. \form|})$,
		\item $|\translated{a(x)}| = |\translated{\starter(x)}| = |\translated{\terminator(x)}|
		= |\translated{x < y}| = |\translated{x \eventorder y}| = O(1)$ since $k$ is fixed.
	\end{itemize}
\end{proof}

\subsection{Proof of \Cref{prop-ltlp-sube}}
\label{app:proof-prop-ltlp-sube}
{
    \renewcommand{\theproposition}{\ref{prop-ltlp-sube}}
    \begin{proposition}
        
    \end{proposition}
    \addtocounter{proposition}{-1}
}
\begin{proof}
    First observe that $\sim^s$ is an equivalence relation over events. Indeed, it is reflexive since $m <^s q$ implies $m \neq q$ and
symmetric by construction. As for transitivity, assume that $m \sim^s q \sim^s r$. To show that $m \sim^s r$, it is enough to show $m \not<^s r$.
Assume by contradiction that there is $t \in P$ s.t.\ $t\parallel m$, and $t < r$. Then, we prove $t \parallel q$ as follows:
\begin{itemize}
    \item If $t < q$, then $m <^s q$ because $t \parallel m$. But this is false since $m \sim^s q$.
    \item If $q < t$, then $q < r$ since $t < r$. This implies that $q <^s r$, which is false by hypothesis.
\end{itemize}
Since $q \sim^s r$, $q \not<^s r$. Therefore, as $t\parallel q$, it cannot be that $t < r$. Since this was a hypothesis, we have a contradiction, and
$m \sim^s r$.

We can now proceed to the main proof. It is sufficient to prove that $P, (x,m) \models \form$ implies $P, (y,q) \models \form$.
First, observe that $x = y$ by definition of $\equiv$, hence we write $(x,q)$ from now on.
We proceed by induction over $\form$.
\begin{itemize}
    \item $P, (x,m) \models a$ implies $\labelling(x) = a$, thus $P,(x,q) \models a$
    \item $P, (x,m) \models \starter$ implies $x \sim^s m$. Since $m \sim^s q$, this yields $x \sim^s q$, which in turn becomes $P, (x,q) \models \starter$
    \item $P, (x,m) \models \terminator$ implies that there is no $r \in P$ such that $x \parallel r$ and $m <^s r$. Since $m \sim^s q$,
    $m <^s r$ is equivalent to $q <^s r$. This in turn implies $P, (x,q) \models \terminator$.
    \item $P, (x,m) \models \neg\psi$ implies $P, (x,m) \not\models \psi$, which implies, by induction hypothesis,
    $P, (x,q) \not\models \psi$. Thus, $P, (x,q) \models \neg \psi$.
    \item $P, (x,m) \models \psi_1 \wedge \psi_2$ implies $P, (x,m) \models \psi_1$ and $P, (x,m) \models \psi_2$. Hence, by
    induction hypothesis, $P, (x,q) \models \psi_1$ and $P, (x,q) \models \psi_2$. It comes that $P, (x,q) \models \psi_1 \wedge \psi_2$.
    \item $P, (x,m) \models \enext \psi$ implies that there exists a sub-event $(y,r)$ such that $(x,m) \prec_1 (y,r)$ and $P, (y,r) \models \psi$.
    Because $(x,m) \prec_1 (y,r)$, we have that $m <^s r$ and there is no $s \in P$ such that $m <^s s <^s r$. Since $m \sim^s q$,
    it comes that $q <^s r$ and there is no $s$ such that $q <^s s <^s r$. Thus, $(x,q) \prec_1 (y,r)$ and $P,(x,q) \models \enext \psi$.
    \item $P, (x,m) \models \ltlcirc \psi$ implies that there is $y$ such that $x \sim^s y$, $y$ is a direct successor of $x$ by $\eventorder$
    and $P, (y,m) \models \psi$.
    Since $m \sim^s q$, $(y,q)$ is a sub-event and $P, (y,q) \models \psi$ by induction. Hence, $P, (x,q) \models \ltlcirc\psi$.
    \item For $\ltlrcirc \psi$, the argument is identical.
    \item Finally, for $P, (x,m) \models \psi_1 \until \psi_2$, fix $(y,r)$ to be such that $P,(y,r) \models \psi_2$ and $(x,m) \preccurlyeq (y,r)$ and any
    $(z,s)$ such that $(x,m) \preccurlyeq (z,s) \prec (y,r)$ verifies $P, (x,s)\models \psi_1$.
    Since $m \sim^s q$, we have that $(x,m)\preccurlyeq (y,r)$ implies $(x,q) \preccurlyeq (y,r)$, and the same goes for the
    intermediary $(z,s)$. Hence, $P, (x,q) \models \psi_1 \until \psi_2$. \qedhere
\end{itemize}

\end{proof}

\subsection{Proof of \Cref{prop-induction}}
\label{app:proof-prop-induction}
To prove \Cref{prop-induction}, we first introduce \Cref{def-evensparse} and establish \Cref{lem-order-start,lem-prec-sube,lem-lsptl-next,lem-induction}.

\begin{definition}
    \label{def-evensparse}
    Given an \evensparse ST-decomposition $P_1 P_2 \cdots P_{2n}$ and a sub-event $(x,m)$, we say that $(x,m)$ \defs{belongs} to the pair $P_{2i-1} P_{2i}$
    if $m$ is started by $P_{2i-1}$.  
\end{definition}

\begin{lemma}
    \label{lem-order-start}
    Let $P$ be a pomset with \evensparse ST-decomposition $P_1 P_2 \cdots P_{2n}$. Let $x,y\in P$ such that $x$ starts in $P_{2i-1}$ and
    $y$ starts in $P_{2j-1}$. Then, $i < j$ if and only if $x <^s y$.
\end{lemma}
\begin{proof}
    Assume that $i < j$. Then, fix $z$ terminating in $P_{2i}$. We have $x \parallel z$ since they are both running in $P_{2i}$
    and $z < y$ since $y$ starts in $P_{2j-1}$. Therefore, $x <^s y$.
    Conversely, assume that $x <^s y$. Fix $z$ such that $x \parallel z$ and $z < y$. We must have that $z$ is not yet terminated in $P_{2i-1}$ since $x$ starts here,
    but it must terminate before $P_{2j-1}$. Therefore, $i < j$.
\end{proof}

\begin{lemma}\label{lem-prec-sube}
    Let $P$ be a pomset with \evensparse decomposition $P_1 P_2 \cdots P_{2n}$.
    Fix two sub-events: $(x,m)$ belonging to $P_{2i-1}P_{2i}$ and $(y,q)$ belonging to $P_{2j-1}P_{2j}$.
    If $(x,m)\prec(y,q)$, then $i < j$.
\end{lemma}
\begin{proof}
    If $x < y$, then $i<j$ because $x$ must be terminated before $y$ starts.
    Otherwise, $x = y$ and $m <^s q$. By \Cref{lem-order-start}, we have $i < j$ since $m$ starts in $P_{2i-1}$ and $q$ starts in $P_{2j-1}$.
\end{proof}

\begin{lemma}
    \label{lem-lsptl-next}
    Let $P$ be a pomset with \evensparse decomposition $P_1 P_2 \cdots P_{2n}$.
    Given a sub-event $(x,m)$ of $P_{2i-1}P_{2i}$,
    all sub-events $(y,q)$ such that $(x,m) \prec_1 (y,q)$ belong to $P_{2i+1}P_{2i+2}$.
\end{lemma}
\begin{proof}
    By \Cref{lem-prec-sube}, $m$ starts before $q$. Further, if $q$ is not yet started in $P_{2i+1}$, then any event $r$ started in $P_{2i+1}$
    would be such that $m <^s r <^s q$, which is impossible since $(x,m) \prec_1 (y,q)$.
\end{proof}

The next \Cref{lem-induction} is the main technical part of the proof

\begin{lemma}
    \label{lem-induction}
    Given an LTL-ST formula $\form$, there exists $\translated{\form}_\starter$ and
    $\translated{\form}_\terminator$ such that for any pomset $P$ of \evensparse ST-decomposition $P_1 P_2 \cdots P_{2n}$,
    for any $i \in \{1, \dots, n\}$ for any sub-event $(x,m)$ of $P_{2i-1} P_{2i}$,
    $P_{2i-1} P_{2i} \cdots P_{2n} \models \form$ if and only if $P, (x,m) \models \translated{\form}_\starter$ and
    $P_{2i} P_{2i+1}  \cdots P_{2n} \models \form$ if and only if $P, (x,m) \models \translated{\form}_\terminator$.
\end{lemma}
\begin{proof}
    We prove this by induction over $\form$.
\begin{itemize}
    \item For the base case where $\form$ is a starter or terminator $P^* \in \stset{k}$, denote $a_1, a_2, \cdots, a_\ell$, the labels of the events
    in $P^*$, ordered by $\eventorder$. If $P^*$ is not a starter, then define $\translated{P^*}_\starter = \bot$.
    Otherwise, fix
    $\translated{P^*}_\starter = \ltlpar (\neg\ltlrcirc \top \wedge \bigwedge_{i=1}^\ell \ltlcirc^{i-1}\form_i \wedge \neg\ltlcirc^{\ell}\top)$ where $\form_i = a_i \wedge \starter$ if
    $P^*$ starts its $i$-th event, or $\form_i = a_i \wedge \neg\starter$ otherwise. 
    $\translated{P^*}_\terminator$ is defined similarly.
    \item $\translated{\neg\psi}_\starter = \neg\translated{\psi}_\starter$
    \item $\translated{\neg\psi}_\terminator = \neg\translated{\psi}_\terminator$
    \item $\translated{\psi_1 \wedge \psi_2}_\starter = \translated{\psi_1}_\starter \wedge \translated{\psi_2}_\starter$
    \item $\translated{\psi_1 \wedge \psi_2}_\terminator = \translated{\psi_1}_\terminator \wedge \translated{\psi_2}_\terminator$
    \item $\translated{\ltlnext \psi}_\starter = \translated{\psi}_\terminator$. Indeed, $P_{2i-1} P_{2i} \cdots P_{2n}\models \ltlnext \psi$
    if and only if $P_{2i} P_{2i+1} \cdots P_{2n}\models\psi$, which is in turn equivalent to $P, (x,m) \models \translated{\psi}_\terminator$ by
    induction hypothesis.
    \item $\translated{\ltlnext \psi}_\terminator
    = \enext \translated{\psi}_\starter$
    \begin{itemize}
        \item Assume that $P_{2i} P_{2i+1} \cdots P_{2n} \models \ltlnext\psi$. This is true if and only if $P_{2i+1} P_{2i+2} \cdots P_{2n} \models \psi$,
        which is equivalent by induction hypothesis to the fact that any sub-event $(y,q)$ of $P_{2i+1}P_{2i+2}$ satisfies $P, (y,q) \models \translated{\psi}_\starter$. However the sub-events accessible by
        $\enext$ from $(x,m)$ are all sub-events of $P_{2i+1}P_{2i+2}$ by \Cref{lem-lsptl-next} and the definition of $\enext$ for \lsptl.
        Therefore, we have that $P, (x,m) \models \enext \translated{\psi}_\starter$.
        \item Conversely, if $P, (x,m) \models \enext \translated{\psi}_\starter$, then there exists a $(y,q)$ belonging
        to $P_{2i+1} P_{2i+2}$ verifying $P, (y,q) \models \translated{\psi}_\starter$. This implies that $P_{2i+1} P_{2i+2} \cdots P_{2n} \models \psi$, which yields the desired conclusion.
    \end{itemize}
    \item $\translated{\psi_1 \until \psi_2}_\starter = (\translated{\psi_1}_\starter \wedge \translated{\psi_1}_\terminator)
    \until
    (\translated{\psi_2}_\starter \vee (\translated{\psi_1}_\starter \wedge \translated{\psi_2}_\terminator))$
    \begin{itemize}
        \item Assume that $P_{2i-1} \cdots P_n \models \psi_1 \until \psi_2$.
        If $P_{2i-1} \cdots P_{2n} \models \psi_2$ then $P, (x,m)\models \translated{\psi_2}_\starter$
        by induction hypothesis, which implies that $P, (x,m) \models
        (\translated{\psi_1}_\starter \wedge \translated{\psi_1}_\terminator)
        \until
        (\translated{\psi_2}_\starter \vee (\translated{\psi_1}_\starter \wedge \translated{\psi_2}_\terminator))$.

        Now, assume that $P_{2i-1} \cdots P_{2n} \not\models \psi_2$.
        Since $P_{2i-1} \cdots P_n \models \psi_1 \until \psi_2$,
        fix $j > i$ such that $P_{2j-1} \cdots P_n \models \psi_2$ or
        $P_{2j} \cdots P_n \models \psi_2$ and all intermediate $\ell$'s verify
        $P_\ell \cdots P_n \models \psi_1$. Then, fix $(y,q)$, sub-event in 
        $P_{2j-1}P_{2j}$ such that $(x,m) \prec (y,q)$ (there exists such a sub-event:
        if $x$ is still running in $P_{2j-1}$, take $(x, q)$ where $q$ is an event started
        in $P_{2j-1}$; if $x$ is not running anymore, take $(q,q)$ with the same $q$).
        Then, we have that $P, (y,q)\models \translated{\psi_2}_\starter \vee
        (\translated{\psi_1}_\starter \wedge \translated{\psi_2}_\terminator)$ by induction.
        Now, fix $(z,r)$ such that $(x,m) \preccurlyeq (z,r) \prec (y,q)$.
        This implies that $(z,r)$ belongs to one of $P_{2\ell-1}P_{2\ell}$ with
        $i \leq \ell < j$. Therefore, we have that $P, (z,r) \models \translated{\psi_1}_\starter \wedge \translated{\psi_1}_\terminator$
        by induction.
        Hence, we have that $P, (x,m) \models
        (\translated{\psi_1}_\starter \wedge \translated{\psi_1}_\terminator)
        \until
        (\translated{\psi_2}_\starter \vee (\translated{\psi_1}_\starter \wedge \translated{\psi_2}_\terminator))$
        \item Conversely, assume that $P, (x,m) \models
        (\translated{\psi_1}_\starter \wedge \translated{\psi_1}_\terminator)
        \until
        (\translated{\psi_2}_\starter \vee (\translated{\psi_1}_\starter \wedge \translated{\psi_2}_\terminator))$.
        Therefore, fix $(y,q)$ such that $(x,m) \preccurlyeq (y,q)$, $P, (y,q) \models 
        \translated{\psi_2}_\starter \vee (\translated{\psi_1}_\starter \wedge \translated{\psi_2}_\terminator)$
        and, for any $(z,r)$ such that $(x,m) \preccurlyeq (z,r) \prec (y,q)$, $P,(z,r) \models  \translated{\psi_1}_\starter \wedge \translated{\psi_1}_\terminator$.
        \begin{itemize}
            \item If $(x,m) = (y,q)$, then $P,(x,m) \models \translated{\psi_2}_\starter \vee (\translated{\psi_1}_\starter \wedge \translated{\psi_2}_\terminator)$.
            Fix $i$ such that $(x,m)$ is a sub-event of $P_{2i-1}P_{2i}$. By induction hypothesis, we have that
            $P_{2i-1} \cdots P_{2n} \models \psi_1 \until \psi_2$ (by satisfying $\psi_2$ in zero or one step).
            \item If $(x,m) \prec (y,q)$, then fix $i,j$ such that $(x,m)$ is a sub-event of $P_{2i-1} P_{2i}$ and
            $(y,q)$ is a sub-event of $P_{2j-1} P_{2j}$. Since $(x,m) \prec (y,q)$ we have that $i < j$ by \Cref{lem-prec-sube}.
            We have that $P_{2i-1} \cdots P_{2n} \models \psi_1$ and $P_{2i} \cdots P_{2n} \models \psi_1$
            since $(x,m) \models \translated{\psi_1}_\starter \wedge \translated{\psi_1}_\terminator$.
            Fix now $\ell$ such that $i < \ell < j$. if $x$ is still running in $P_{2\ell-1}$, fix $z = x$. Elsewhere, fix $z$ to be an event terminated
            by $P_{2\ell}$. Fix $r$ as an event started in $P_{2\ell - 1}$. We have that $(x,m) \prec (z,r) \prec (y,q)$.
            Now, observe that $2i-1 < 2\ell-1 < 2\ell < 2j-1$. Therefore, $P_{2\ell-1} \cdots P_{2n} \models \psi_1$
            and $P_{2\ell} \cdots P_{2n} \models \psi_1$.
            Finally, we have that $P_{2j-1} \cdots P_{2n} \models \psi_1 \until \psi_2$ (in zero or one step) since
            $P,(y,q) \models \translated{\psi_2}_\starter \vee (\translated{\psi_1}_\starter \wedge \translated{\psi_2}_\terminator)$
            Therefore, we can conclude that $P_{2i-1} \cdots P_{2n} \models \psi_1 \until \psi_2$.
        \end{itemize}
    \end{itemize}
    \item The translation $\translated{\psi_1 \until \psi_2}_\terminator$ can be inferred from the previous cases, since $\psi_1 \until \psi_2$ is equivalent to $\psi_2 \vee (\psi_1 \wedge \ltlnext (\psi_1 \until \psi_2))$.
	This yields:
	\begin{align*}
	\translated{\psi_1 \until \psi_2}_\terminator
	&= \translated{\psi_2 \vee (\psi_1 \wedge \ltlnext (\psi_1 \until \psi_2))}_\terminator\\
	&= \translated{\psi_2}_\terminator \vee (\translated{\psi_1}_\terminator \wedge \enext \translated{\psi_1 \until \psi_2}_\starter)\\
	&= \translated{\psi_2}_\terminator \vee (\translated{\psi_1}_\terminator \wedge \enext ((\translated{\psi_1}_\starter \wedge \translated{\psi_1}_\terminator)
    \until
    (\translated{\psi_2}_\starter \vee (\translated{\psi_1}_\starter \wedge \translated{\psi_2}_\terminator))))
	\end{align*}
\end{itemize}
This concludes the induction.

\end{proof}

We can now prove \Cref{prop-induction}. Let us first recall its statement.

{
    \renewcommand{\theproposition}{\ref{prop-induction}}
    \begin{proposition}
        
    \end{proposition}
    \addtocounter{proposition}{-1}
}

\begin{proof}
    Fix $\translated{\form} = \translated{\form}_\starter$ as defined by \Cref{lem-induction}.
	Recall that $P \models \translated{\phi}$ means that $P, \src{P} \models \translated{\phi}$, where $\src{P} = (x,x)$ for $x$ an $<$-minimal event of~$P$.
	Since $x$ is $<$-minimal, it must be started by $P_1$, or else any event $y$ terminated
    by $P_2$ would be such that $y<x$. Therefore, $(x,x)$ belongs to $P_1P_2$, hence the result.
\end{proof}

\begin{lemma}
    The size of $\translated{\form}$ in the worst case is exponential in the size of $\form$.
\end{lemma}

\begin{proof}
    First, let us show that $|\translated{\form}| = \Omega(2^{|\form|})$ in the worst case.
    Fix $\form_0 = P$ for some starter or terminator $P$ and $\form_{n+1} = \form_n \until P$.
    Then, $|\translated{\form_0}|$ is of constant size.
    Now, $|\translated{\form_{n+1}}| = |\translated{\form_{n+1}}_\starter| =
    |\translated{\form_n \until P}_\starter| = |(\translated{\form_n}_\starter \wedge \translated{\form_n}_\terminator)
    \until
    (\translated{P}_\starter \vee (\translated{\form_n}_\starter \wedge \translated{P}_\terminator))| > 2 \cdot |\translated{\form_n}|$.
    Hence, $\translated{\form_{n}} = \Omega(2^n) = \Omega(2^{|\form|})$

    Let us now proceed to show that $|\translated{\form}|$ at most exponential in $|\form|$ in any case. We prove this by induction over $\form$.
    \begin{itemize}
        \item If $\form$ is an atomic formula $P$, then $\translated{\form}$ is of constant size.
        \item If $\form$ is a boolean combination of other formulas, $\translated{\form}$ is of polynomial size in the sizes of the translations of those formulas since $\neg$ and $\wedge$
        are directly translated.
        \item If $\form = \ltlnext \psi$, then $\translated{\form}$ is in polynomial size in $\translated{\psi}$, which is exponential in $|\psi| = |\varphi| - 1$.
        \item If $\form = \psi_1 \until \psi_2$, then $|\translated{\form}_\starter| = C + 2\cdot|\translated{\psi_1}_\starter| + |\translated{\psi_1}_\terminator| + |\translated{\psi_2}_\starter| + |\translated{\psi_2}_\terminator|$,
        This is at most exponential in the size of $\psi_1 \until \psi_2$ because each of the translated subformulas are, and $|\translated{\form}|$ is a polynomial in the sizes of the translations of these subformulas.
    \end{itemize}
\end{proof}

\subsection{Proof of \Cref{lem-LTLP-to-FOP}}
\label{app:proof-lem-ltlp-to-fop}
{
    \renewcommand{\theproposition}{\ref{lem-LTLP-to-FOP}}
    \begin{proposition}
        
    \end{proposition}
    \addtocounter{proposition}{-1}
}
\begin{proof}
    The essential idea is that the semantics of \lsptl, expressed in natural language in \Cref{def-ltlp}, can be written in first-order.
Let us first define a few useful \fo formulas.
\begin{itemize}
    \item $x \parallel y := \neg x < y \wedge \neg y < x$
    \item $x <^s y := \exists z. x \parallel z \wedge z < y$
    \item $x \sim^s y := \neg x <^s y \wedge \neg y <^s x$
    \item $x \lesssim^s y := x <^s y \vee x \sim^s y$
    \item $\fose(x,m) := x \parallel m  \wedge x \lesssim^s m$ (\ie $(x,m)$ is a sub-event)
    \item $(x,m) \prec (y,q) := x < y \vee (x = y \wedge m <^s q)$
    \item $(x,m) \preccurlyeq (y,q) := (x,m) \prec (y,q) \vee (x = y \wedge m \sim^s q)$
    \item $(x,m) \prec_1 (y,q) := (x,m) \prec (y,q) \wedge \neg\exists r. (m <^s r \wedge r <^s q)$
\end{itemize}
We can now write the translation. Fix $\form$, formula of \lsptl. We construct $\translated{\form}(x,m)$ with two free variables $x$ and $m$,
such that for any sub-event $(e,f)$ of $P$, $P, [x \mapsto e, m \mapsto f] \models \translated{\form}(x,m)$ if and only if
$P, (e,f) \models \form$.
\begin{itemize}
    \item For any $a \in \Sigma$, $\translated{a}(x,m) = a(x)$
    \item $\translated{\starter}(x,m) = \neg\starter(x) \wedge x \sim^s m$
    \item $\translated{\terminator}(x,m) = \neg\terminator(x) \wedge \neg\exists y. (x \parallel y \wedge m <^s y)$
    \item $\translated{\neg\form}(x,m) = \neg\translated{\form}(x,m)$
    \item $\translated{\form_1 \wedge \form_2}(x,m) = \translated{\form_1}(x,m) \wedge \translated{\form_2}(x,m)$
    \item $\translated{\enext \form}(x,m) = \exists y.\exists q. (\fose(y,q) \wedge (x,m) \prec_1 (y,q) \wedge \translated{\form}(y,q))$
    \item $\translated{\ltlcirc \form}(x,m) = \exists y. (\fose(y,m) \wedge x \eventorder y \wedge \neg\exists z. x \eventorder z \wedge z \eventorder y \wedge \translated{\varphi}(y,m))$
    \item $\translated{\ltlrcirc \form}(x,m) = \exists y. (\fose(y,m) \wedge y \eventorder x \wedge \neg\exists z. y \eventorder z \wedge z \eventorder x \wedge \translated{\varphi}(y,m))$
    \item $\translated{\form_1 \until \form_2}(x,m) =
        \exists y. \exists q. [
            \fose(y,q) \wedge (x,m) \preccurlyeq (y,q) \wedge \translated{\form_2}(y,q) \wedge
            \forall z. \forall r. (\fose(z,r) \wedge (x,m) \preccurlyeq (z,r) \wedge (z,r) \prec (y,q)) \Rightarrow \translated{\form_1}(z,q)
        ]$
\end{itemize}
We can now write $\translated{\form} = \exists x. (\forall y. (\forall z. \neg z < y) \Rightarrow x \eventorder y) \wedge \translated{\form}(x,x)$, which
concludes the proof.

\end{proof}

\begin{lemma}
    The size of $\translated{\form}$ in \Cref{lem-LTLP-to-FOP} is linear in the size of $\form$.
\end{lemma}

\begin{proof}
    We prove this by induction over $\form$ :
    \begin{itemize}
        \item If $\form$ is an atomic formula $a \in \Sigma$, $\starter$ or $\terminator$, then the size of $\translated{\form}$ is constant
        \item If $\form = \neg\psi$, then $\translated{\form} = \neg\translated{\psi}$ is of size linear in $\neg\psi$ since $\translated{\psi}$ is of size linear in $\psi$ by induction hypothesis
        \item If $\form = \psi_1 \wedge \psi_2$, then the size of $\translated{\form} = \translated{\psi_1} \wedge \translated{\psi_2}$ is linear in $\psi_1 \wedge \psi_2$ by induction hypothesis
        \item If $\form = \enext \psi$, then $|\translated{\form}| = C + |\translated{\psi}|$ which is linear in $|\form|$ by induction hypothesis.
        \item The same goes for $\form = \ltlcirc \psi$ or $\ltlrcirc \psi$.
        \item Finally, if $\form = \psi_1 \until \psi_2$, then $|\translated{\form}| = C + |\translated{\psi_1}| + \translated{\psi_2}|$, which is still linear in $|\translated{\form}| = |\translated{\psi_1}| + |\translated{\psi_2}|$.
    \end{itemize}
\end{proof}

\end{document}